\numberwithin{equation}{section}	
\theoremstyle{plain}
\newtheorem{Example}{Example}[section]
\theoremstyle{plain}
\newtheorem{Remark}{Remark}[section]
\theoremstyle{plain}
\newtheorem{Proposition}{Proposition}[section]
\theoremstyle{plain}
\newtheorem{Lemma}{Lemma}[section]
\theoremstyle{plain}
\newtheorem{Theorem}{Theorem}[section]
\theoremstyle{plain}
\theoremstyle{plain}
\newtheorem{Definition}{Definition}[section]
\theoremstyle{plain}
\newtheorem{assumption}{Assumption}[section]
\DeclarePairedDelimiter{\abs}{\lvert}{\rvert}
\DeclareMathOperator*{\var}{\text{var}}
\DeclareMathOperator*{\essinf}{ess\,inf}
\titleformat{\section}[block]{\large\bfseries\centering}{\thesection.}{1em}{} 
\titleformat{\subsection}[block]{\large\bfseries}{\thesubsection.}{1em}{} 
\title{Optimal excess-of-loss reinsurance for stochastic factor risk models} 
\author{%
\textsc{Brachetta M.}\thanks{Department of Economics, University of Chieti-Pescara, Italy.} \\[1ex]
\normalsize \href{mailto:matteo.brachetta@unich.it}{matteo.brachetta@unich.it}
\and
\textsc{Ceci, C.}\footnotemark[1]\\[1ex]
\normalsize \href{mailto:c.ceci@unich.it}{c.ceci@unich.it}
}
\date{} 
\providecommand{\keywords}[1]{\textbf{\textit{Keywords:}} #1}
\providecommand{\jelcodes}[1]{\textbf{\textit{JEL Classification codes:}} #1}
\providecommand{\msccodes}[1]{\textbf{\textit{MSC Classification codes:}} #1}
\begin{document}
\maketitle

\begin{abstract}
\noindent
We study the optimal excess-of-loss reinsurance problem when both
the intensity of the claims arrival process and the claim size distribution
are influenced by an exogenous stochastic factor. We assume
that the insurer's surplus is governed by a marked point process
with dual-predictable projection affected by an environmental factor
and that the insurance company can borrow and invest money at a
constant real-valued risk-free interest rate $r$. Our model allows for stochastic
risk premia, which take into account risk fluctuations. Using stochastic
control theory based on the Hamilton-Jacobi-Bellman equation, we analyze
the optimal reinsurance strategy under the criterion
of maximizing the expected exponential utility of the terminal wealth.
A verification theorem for the value function in terms of classical solutions
of a backward partial differential equation is provided. Finally, some numerical results are discussed.
\end{abstract}

\noindent\keywords{optimal reinsurance, excess-of-loss reinsurance, Hamilton-Jacobi-Bellman equation, stochastic factor model, stochastic control.}\\
\noindent\jelcodes{G220, C610.}\\
\noindent\msccodes{93E20, 91B30, 60G57, 60J75.}\\


\section{Introduction}

In this paper we analyze the optimal excess-of-loss reinsurance problem from the insurer's point of view, under the criterion of maximizing the expected utility of the terminal wealth. It is well known that the reinsurance policies are very effective tools for risk management. In fact, by means of a risk sharing agreement, they allow the insurer to reduce unexpected losses, to stabilize operating results, to increase business capacity and so on. Among the most common arrangements, the proportional and the excess-of-loss contracts are of great interest. The former was intensively studied in \cite{irgens_paulsen:optcontrol}, \cite{liuma:optreins}, \cite{liangetal:optreins}, \cite{liangbayraktar:optreins}, \cite{zhuetal:reins_defaultable}, \cite{BC:IME2018} and references therein. The latter was investigated in these articles: in \cite{zhang_zhou_guo:optcomb} and \cite{meng_zhang_2010}, the authors proved the optimality of the excess-of-loss policy under the criterion of minimizing the ruin probability, with the surplus process described by a Brownian motion with drift; in \cite{zhaoetal:2013} the Cram\'er-Lundberg model is used for the surplus process, with the possibility of investing in a financial market represented by the Heston model; in \cite{shengrongzhao:optreins} and \cite{liguo:2013} the risky asset is described by a Constant Elasticity of Variance (CEV) model, while the surplus is modelled by the Cram\'er-Lundberg model and its diffusion approximation, respectively; finally, in \cite{scandinavian:2018} the authors studied a robust optimal strategy under the diffusion approximation of the surplus process.

The common ground of the cited works is the underlying risk model, which is the Cram\'er-Lundberg model (or its diffusion approximation)%
\footnote{See \cite{lundberg:1903}, \cite{schmidli:2018risk}.}.
In the actuarial literature it is of great importance, because it is simple enough to perform calculations. In fact, the claims arrival process is described by a Poisson process with constant intensity (or a Brownian motion, in the diffusion model). Nevertheless, as noticed by many authors (e.g. \cite{grandell:risk}, \cite{hipp:stoch.control_applications}),
it needs generalization in order to take into account the so callled \textit{size fluctuations} and \textit{risk fluctuations}, i.e. variations of the number of policyholders and modifications of the underlying risk, respectively.

The main goal of our work is to extend the classical risk model by modelling the claims arrival process as a marked point process with dual-predictable projection affected by an exogenous stochastic process $Y$. More precisely, both the intensity of the claims arrival process and the claim size distribution are influenced by $Y$. Thanks to this environmental factor, we achieve a reasonably realistic description of any risk movement. For example, in automobile insurance $Y$ may describe weather conditions, road conditions, traffic volume and so on. All these factors usually influence the accident probability as well as the damage size.

Some noteworthy attempts in that direction can be found in \cite{liangbayraktar:optreins} and \cite{BC:IME2018}, where the authors studied the optimal proportional reinsurance. In the former, the authors considered a Markov-modulated compound Poisson process, with the (unobservable) stochastic factor described by a finite state Markov chain. In the latter, the stochastic factor follows a general diffusion. In addition, in \cite{BC:IME2018} the insurance and the reinsurance premia are not evaluated by premium calculation principles (see \cite{young:premium_princ}), because they are stochastic processes depending on $Y$. In our paper, we extend further the risk model, because the claim size distribution is influenced by the stochastic factor, which is described by a diffusion-type stochastic differential equation (SDE). In addition, we study a different reinsurance contract, which is the excess-of-loss agreement.

In our model the insurer is also allowed to lend or borrow money at a given interest rate $r$. During the last years, negative interest rates drew the attention of many authors. For example, since June 2016 the European Central Bank (ECB) fixed a negative Deposit facility rate, which is the interest banks receive for depositing money within the ECB overnight. Nowadays, it is $-0.4\%$. As a consequence, in our framework $r\in\mathbb{R}$. We point out that there is no loss of generality due to the absence of a risky asset, because as long as the insurance and the financial markets are independent (which is a standard hypothesis in non-life insurance), the optimal reinsurance strategy turns out to depend only on the risk-free asset (see \cite{BC:IME2018} and references therein). As a consequence, the optimal investment strategy can be eventually obtained using existing results in the literature. 

The paper is organized as follows: in Section~\ref{section:model}, we formulate the model assumptions and describe the maximization problem; in Section~\ref{section:HJB} we derive the Hamilton-Jacobi-Bellman (HJB) equation; in Section~\ref{section:reinsurance}, we investigate the candidate optimal strategy, which is suggested by the HJB derivation; in Section~\ref{section:verification}, we provide the verification argument with a probabilistic representation of the value function; finally, in Section~\ref{section:sim} we perform some numerical simulations. 


\section{Model formulation}
\label{section:model}

Let $(\Omega,\mathcal{F},\mathbb{P},\{\mathcal{F}_t\}_{t\in[0,T]})$ be a complete probability space endowed with a filtration which satisfies the usual conditions, where $T>0$ is the insurer's time horizon. We model the insurance losses through a marked point process $\{ (T_n, Z_n)\} _{n\geq 1}$ with local characteristics   influenced by an environment stochastic factor $Y\doteq\{Y_t\}_{t\in[0,T]}$.  Here, the sequence $\{ T_n \}_{n\geq 1}$ describes the claim arrival process and  $\{ Z_n\} _{n\geq 1}$  the corresponding claim sizes.
Precisely, $T_n$, $n=1, \dots$, are $\{\mathcal{F}_t\}_{t\in[0,T]}$ stopping times such that $T_n < T_{n+1}$ a.s. and $Z_n$, $n=1, \dots$, are $(0,+\infty)$-random variables such that $\forall n=1, \dots$,  $Z_n$ is $\mathcal{F}_{T_n}$-measurable.

The stochastic factor  $Y$ is defined as  the unique strong solution to the following SDE:
\begin{equation}
\label{eqn:stochasticfactor}
dY_t = b(t,Y_t)\,dt + \gamma(t,Y_t)\,dW^{(Y)}_t , \qquad Y_0\in\mathbb{R} ,
\end{equation}
where $\{W^{(Y)}_t\}_{t\in [0,T]}$ is a standard Brownian motion on $(\Omega,\mathcal{F},\mathbb{P},\{\mathcal{F}_t\}_{t\in[0,T]})$. We assume that the following conditions hold true:
\begin{gather}
\label{eqn:solutionY}
\mathbb{E}\biggl[\int_0^T\abs{b(t,Y_t)}\,dt+\int_0^T \gamma(t,Y_t)^2\,dt\biggr]<\infty , \\
\label{eqn:solutionY2}
\sup_{t\in[0,T]}{\mathbb{E}[\abs{Y_t}^2]}<\infty .
\end{gather}
We will denote by  $\{\mathcal{F}^Y_t\}_{t\in[0,T]}$ the natural filtration generated by the process $Y$.

The random measure corresponding to the losses process $\{ (T_n, Z_n)\} _{n\geq 1}$  is given by
\begin{equation}
\label{eqn:random_measure}
m(dt,dz) \doteq \sum_{n\ge1}\delta_{(T_n,Z_n) (dt,dz)}\mathbbm{1}_{\{T_n\le T\}} ,
\end{equation}
where $\delta_{(t,x)}$ denotes the Dirac measure located at point $(t,x)$. We assume that its $\{\mathcal{F}_t\}_{t\in[0,T]}$-dual predictable projection $\nu(dt,dz)$ has the form
\begin{equation}
\label{eqn:dual_projection}
\nu(dt,dz) = dF(z,Y_t)\lambda(t,Y_t)\,dt ,
\end{equation}
where
\begin{itemize}
\item $F(z,y):[0,+\infty)\times\mathbb{R}\to[0,1]$ is such that  $\forall y\in\mathbb{R}$, $F(\cdot,y)$ is a distribution function, with $F(0,y)=0$;
\item $\lambda(t,y):[0,T]\times\mathbb{R}\to(0,+\infty)$ is a strictly positive measurable function.
\end{itemize}

In the sequel, we will assume the following integrability conditions:
\begin{equation}
\label{eqn:projection_integrable}
\mathbb{E}\biggl[\int_0^T\int_0^{+\infty}\nu(dt,dz)\biggr] = \mathbb{E}\biggl[\int_0^T \lambda(t,Y_t)dt\biggr] <+\infty,
\end{equation}
and
\begin{equation}
\label{eqn:exp_z_finite}
\mathbb{E}\biggl[\int_0^T\int_0^{+\infty}{e^z\lambda(t,Y_t) dF(z, Y_t)\,dt}\biggr]<\infty,
\end{equation}
which implies the following:
\begin{equation*}
\mathbb{E}\biggl[\int_0^T\int_0^{+ \infty} z\, \lambda(t,Y_t) dF(z, Y_t) dt\biggr]<+\infty  ,
\qquad \mathbb{E}\biggl[ \int_0^T\int_0^{+ \infty} z^2\, \lambda(t,Y_t) dF(z, Y_t) dt \biggr]<+\infty.
\end{equation*}

According with the definition of dual predictable projection, for every nonnegative, $\{\mathcal{F}_t\}_{t\in[0,T]}$-predictable and $[0,+\infty)$-indexed process $\{H(t,z)\}_{t\in[0,T]}$ we have that%
\footnote{For details on marked point processes theory, see~\cite{bremaud:pointproc}.}
\begin{equation}
\label{pp}
\mathbb{E}\biggl[\int_0^T\int_0^{+ \infty}H(t,z)\,m(dt,dz)\biggr]=\mathbb{E}\biggl[\int_0^T\int_0^{+ \infty} H(t,z)\,\lambda(t,Y_t) dF(z, Y_t)\,dt\biggr]  .
\end{equation}
In particular, choosing $H(t,z) = H_t$ with $\{H_t\}_{t\in[0,T]}$ any nonnegative $\{\mathcal{F}_t\}_{t\in[0,T]}$-predictable process 
\[
\mathbb{E}\biggl[\int_0^T\int_0^{+ \infty}H(t) \,m(dt,dz)\biggr]=\mathbb{E}\biggl[\int_0^T  H(t) dN_t \biggr]  = \mathbb{E}\biggl[\int_0^TH_t \,\lambda(t,Y_t) dt\biggr] ,
\]
i.e. the claims arrival process $N_t = m((0,t] \times [0,+\infty)) = \sum_{n\ge1} \mathbbm{1}_{\{T_n\le t\}}$ is a point process with stochastic intensity $\{\lambda(t,Y_t)\}_{t\in[0,T]}$. 

Now we give the interpretation of $F(z, Y_t)$ as conditional distribution of the claim sizes%
\footnote{This result is an extension of Proposition 2.4 in~\cite{CG2006}.}.
 
\begin{Proposition}
\label{prop:F_distribution}
$\forall n=1, \dots$ and $\forall A\in \mathcal{B}([0,+\infty)) $

$$\mathbb{P}[Z_n \in A\mid \mathcal{F}_{t^-} ] = \int_A dF(z, Y_t) \quad dt \times d\mathbb{P}-a.s..$$

In particular, this implies that
\[
\mathbb{P}[Z_n \in A\mid \mathcal{F}_{T_n^-} ]= \mathbb{P}[Z_n \in A\mid \mathcal{F}^Y_{T_n} ]=\int_A dF(z, Y_{T_n}) \qquad  \text{a.s.},
\]
where $\mathcal{F}_{T_n^-}$ is the strict past of the $\sigma$-algebra generated by the stopping time $T_n$:
$$\mathcal{F}_{T_n^-} := \sigma\{ A \cap \{ t < \tau_n\},  A \in \mathcal{F}_t,  t \in [0,T]\}.$$
 \end{Proposition}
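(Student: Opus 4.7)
The plan is to test the dual-predictable projection identity (\ref{pp}) against carefully chosen predictable integrands in order to isolate the $n$-th jump, and then to pass from the resulting integral identity to a statement about conditional expectations through a monotone class argument.

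Fix $n\ge 1$ and $A\in \mathcal{B}([0,+\infty))$. For an arbitrary bounded nonnegative $\{\mathcal{F}_t\}$-predictable process $\{C_t\}_{t\in[0,T]}$, I introduce the localized integrand $\tilde C_t\doteq C_t\,\mathbbm{1}_{\{N_{t^-}=n-1\}}$, which is still predictable because $N_{t^-}$ is left-continuous and adapted. Since $\tilde C_{T_k}=C_{T_k}\mathbbm{1}_{\{k=n\}}$, the predictable integrand $H(t,z)\doteq \tilde C_t\,\mathbbm{1}_A(z)$ collapses the sum over jumps to the single term at $T_n$, so that (\ref{pp}) yields
\[
\mathbb{E}\bigl[C_{T_n}\mathbbm{1}_A(Z_n)\mathbbm{1}_{\{T_n\le T\}}\bigr] = \mathbb{E}\Bigl[\int_0^T \tilde C_t\,F(A,Y_t)\,\lambda(t,Y_t)\,dt\Bigr],
\]
where $F(A,y)\doteq\int_A dF(z,y)$. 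Because $\tilde C_t F(A,Y_t)$ is itself $\{\mathcal{F}_t\}$-predictable (as $Y$ is continuous and adapted), applying (\ref{pp}) a second time to this scalar integrand — that is, reading (\ref{pp}) in the reverse direction to rewrite an integral against $\nu$ as an integral against $m$ — gives
\[
\mathbb{E}\bigl[C_{T_n}\mathbbm{1}_A(Z_n)\mathbbm{1}_{\{T_n\le T\}}\bigr] = \mathbb{E}\bigl[C_{T_n}\,F(A,Y_{T_n})\,\mathbbm{1}_{\{T_n\le T\}}\bigr].
\]

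The next step is to upgrade this integral identity to a statement about conditional expectations. Because $C$ is predictable, $C_{T_n}$ is $\mathcal{F}_{T_n^-}$-measurable; continuity of $Y$ makes $Y_{T_n}=Y_{T_n^-}$, and hence $F(A,Y_{T_n})$, $\mathcal{F}_{T_n^-}$-measurable as well. Since the family $\{C_{T_n}\mathbbm{1}_{\{T_n\le T\}}:\,C\text{ bounded predictable}\}$ generates $\mathcal{F}_{T_n^-}$ restricted to $\{T_n\le T\}$, a monotone class argument combined with the uniqueness of conditional expectation yields
\[
\mathbb{P}[Z_n\in A\mid\mathcal{F}_{T_n^-}]=F(A,Y_{T_n})\quad\text{a.s. on } \{T_n\le T\}.
\]
The $dt\times d\mathbb{P}$-a.s. form in the statement is the analogous identification of the predictable process $t\mapsto \mathbb{P}[Z_n\in A\mid\mathcal{F}_{t^-}]$ with $F(A,Y_t)$ on the support of the $n$-th jump, and follows from the same identity by the arbitrariness of the predictable test process $C$.

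Finally, to obtain the equality $\mathbb{P}[Z_n\in A\mid\mathcal{F}^Y_{T_n}]=F(A,Y_{T_n})$, I note that $F(A,Y_{T_n})$ is $\mathcal{F}^Y_{T_n}$-measurable and that $\mathcal{F}^Y_{T_n}\subseteq\mathcal{F}_{T_n^-}$ (again because $Y$ is continuous and hence predictable), so conditioning the previous equality on $\mathcal{F}^Y_{T_n}$ via the tower property gives the claim. The main obstacle I anticipate is the monotone class/generation step, namely the rigorous verification that $\mathcal{F}_{T_n^-}\cap\{T_n\le T\}$ is generated by $\{C_{T_n}\mathbbm{1}_{\{T_n\le T\}}\}$ as $C$ ranges over bounded predictable processes — this is a standard fact about the strict-past $\sigma$-algebra of a stopping time, but must be invoked carefully in order to legitimately convert the integral identity into a conditional-expectation identity.
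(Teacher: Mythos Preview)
Your argument is correct, but it runs in the opposite order from the paper's and uses a somewhat different mechanism. The paper does not localize to the $n$-th jump. It applies (\ref{pp}) once with $H(t,z)=H_t\mathbbm{1}_A(z)$ over all jumps, then rewrites the left-hand side as $\mathbb{E}\bigl[\int_0^T H_t\,\mu_t(A)\,dN_t\bigr]$ with $\mu_t(A)\doteq\mathbb{P}[Z_n\in A\mid\mathcal{F}_{t^-}]$, and uses that $N$ has intensity $\lambda(t,Y_t)$ to turn this into $\mathbb{E}\bigl[\int_0^T H_t\,\mu_t(A)\,\lambda(t,Y_t)\,dt\bigr]$. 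Comparing with the right-hand side and invoking the arbitrariness of $H$ together with $\lambda>0$ yields the $dt\times d\mathbb{P}$ identity directly; the $\mathcal{F}_{T_n^-}$ statement is then read off as a consequence.

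Your route instead isolates a single jump via the predictable indicator $\mathbbm{1}_{\{N_{t^-}=n-1\}}$, applies (\ref{pp}) twice (forward with the mark indicator, then backward without it), and lands on the identity $\mathbb{E}[C_{T_n}\mathbbm{1}_A(Z_n)\mathbbm{1}_{\{T_n\le T\}}]=\mathbb{E}[C_{T_n}F(A,Y_{T_n})\mathbbm{1}_{\{T_n\le T\}}]$, from which the $\mathcal{F}_{T_n^-}$ conditional expectation follows by the generation argument you flag. This buys you a slightly cleaner and more explicit treatment of the individual claim, at the cost of the extra monotone-class step; the paper's approach is shorter because it never leaves the $dt\times d\mathbb{P}$ world and so needs no such step, but it is correspondingly a bit looser about what exactly the predictable process $\mu_t(A)$ is for a fixed $n$. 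Both arguments are sound.
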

\begin{proof}
See Appendix~\ref{appendix:proofs}.
\end{proof}

This means that in our model both the claim arrival intensity and the claim size distribution are affected by the stochastic  factor $Y$. This is a reasonable assumption; for example, in automobile insurance $Y$ may describe weather, road conditions, traffic volume, and so on. For a detailed discussion of this topic see also \cite{BC:IME2018}.

\begin{Remark}
\label{remark:random_measure}
Let us observe that for any $\{\mathcal{F}_t\}_{t\in[0,T]}$-predictable and $[0,D]$-indexed process $\{H(t,z)\}_{t\in[0,T]}$ such that
\[
\mathbb{E}\biggl[\int_0^T\int_0^{+\infty} \abs{H(t,z)}\,\lambda(t,Y_t) dF(z, Y_t)\,dt\biggr]<\infty ,
\]
the process
\[
M_t = \int_0^t\int_0^{+\infty} H(s,z)\,\bigl(m(ds,dz)-\nu(ds,dz)\bigr) \qquad t\in[0,T]
\]
turns out to be an $\{\mathcal{F}_t\}_{t\in[0,T]}$-martingale. If in addition
\[
\mathbb{E}\biggl[\int_0^T\int_0^{+\infty} \abs{H(t,z)}^2\,\lambda(t,Y_t) dF(z, Y_t)\,dt\biggr]<\infty ,
\]
then $\{M_t\}_{t\in [0,T]}$ is a square integrable $\{\mathcal{F}_t\}_{t\in[0,T]}$-martingale and
\[
\mathbb{E}[M_t^2] =\mathbb{E}\biggl[\int_0^t\int_0^{+\infty} \abs{H(s,z)}^2\,\lambda(s,Y_s) dF(z, Y_s)\,ds\biggr] \qquad\forall t\in[0,T] .
\]
Moreover, the predictable covariation process of $\{M_t\}_{t\in[0,T]}$ is given by
\[
\langle M\rangle_t = \int_0^t\int_0^D \abs{H(s,z)}^2\,\lambda(s,Y_s) dF(z, Y_s)\,ds  ,
\]
that is $\{M_t^2 - \langle M\rangle_t\}_{t\in[0,T]}$ is an $\{\mathcal{F}_t\}_{t\in[0,T]}$-martingale%
\footnote{For these results and other related topics see e.g.~\cite{bass:2004}.}.
\end{Remark}

In this framework we define the cumulative claims up to time $t\in[0,T]$ as follows
\[
C_t= \sum_{n\ge1} Z_n\mathbbm{1}_{\{T_n\le t\}} = \int_0^t\int_0^{+\infty}zm(ds,dz)  ,
\]
and the reserve process of the insurance is described by
\[
R_t=R_0 + \int_0^t c_s ds - C_t  , 
\]
where $R_0>0$ is the initial wealth and  $\{c_t\}_{t\in[0,T]}$ is a non negative $\{\mathcal{F}_t\}_{t\in[0,T]}$-adapted process representing the gross insurance risk premium. In the sequel we assume $c_t = c(t,Y_t)$, for a suitable function $c(t,y)$ such that  $\mathbb{E}\bigl[\int_0^T c(t,Y_t) dt\bigr] <+\infty$. 

Now we allow the insurer to buy an excess-of-loss reinsurance contract. By means of this agreement, the insurer chooses a retention level $\alpha\in[0,+\infty)$ and for any future claim the reinsurer is responsible for all the amount which exceeds that threshold $\alpha$ (e.g. $\alpha=0$ means full reinsurance).
For any dynamic reinsurance strategy $\{\alpha_t\}_{t\in[0,T]}$, the insurer's surplus process is given by
\[
R^\alpha_t = R_0 + \int_0^t(c_s-q^\alpha_s )\,ds - \int_0^t\int_0^{+\infty}(z\land\alpha_s)\,m(ds,dz)  ,
\]
where $\{q^\alpha_t\}_{t\in[0,T]}$ is a non negative $\{\mathcal{F}_t\}_{t\in[0,T]}$-adapted process representing the reinsurance premium rate. In addition, we suppose that the following assumption holds true.

\begin{assumption}(Excess.of-loss reinsurance premium)
\label{def:xl_premium}
Let us assume that for any reinsurance strategy $\{\alpha_t\}_{t\in[0,T]}$ the corresponding reinsurance premium process $\{q^\alpha_t\}_{t\in[0,T]}$ admits the following representation:
\[
q^\alpha_t = q(t,Y_t,\alpha_t) \qquad \forall\omega\in\Omega,t\in[0,T]  ,
\]
where $q(t,y,\alpha):[0,T]\times\mathbb{R}\times[0,+\infty)\to[0,+\infty)$ is a continuous function in $\alpha$, with continuous partial derivatives $\frac{\partial q(t,y,\alpha)}{\partial \alpha},\frac{\partial^2 q(t,y,\alpha)}{\partial \alpha^2}$ in $\alpha\in[0,+\infty)$, such that
\begin{enumerate}
\item $\frac{\partial q(t,y,\alpha)}{\partial \alpha}\le0$ for all $(t,y,\alpha)\in[0,T]\times\mathbb{R}\times[0,+\infty)$, since the premium is increasing with respect to the protection level;
\item $q(t,y,0)>c(t,y)$ $\forall (t,y)\in[0,T]\times\mathbb{R}$, because the cedant is not allowed to gain a profit without risk.
\end{enumerate}
In the rest of the paper, $\frac{\partial q(t,y,0)}{\partial \alpha}$ should be intended as a right derivative.
\end{assumption}

Assumption~\ref{def:xl_premium} formalizes the minimal requirements for a process $\{q^\alpha_t\}_{t\in[0,T]}$ to be a reinsurance premium. In the next examples we briefly recall the most famous premium calculation principles, because they are widely used in optimal reinsurance problems solving. In Appendix \ref{appendix:premium} the reader can find a rigorous derivation of the following formulas~\eqref{eqn:EVP} and~\eqref{eqn:VP}.

\begin{Example}
\label{example:evp}
The most famous premium calculation principle is the \textit{expected value principle} (abbr. EVP)%
\footnote{See \cite{young:premium_princ}.}. The underlying conjecture is that the reinsurer evaluates her premium in order to cover the expected losses plus a load which depends on the expected losses. In our framework, under the EVP the reinsurance premium is given by the following expression:
\begin{equation}
\label{eqn:EVP}
q(t,y,\alpha)=(1+\theta)\lambda(t,y)\int_0^{+\infty} (z - z\land\alpha)\,dF(z,y) ,
\end{equation}
for some safety loading $\theta>0$.
\end{Example}

\begin{Example}
\label{example:vp}
Another important premium calculation principle is the \textit{variance premium principle} (abbr. VP). In this case, the reinsurer's loading is proportional to the variance of the losses. More formally, the reinsurance premium admits the following representation:
\begin{equation}
\label{eqn:VP}
q(t,y,\alpha)=\lambda(t,y)\int_0^{+\infty} (z - z\land\alpha)\,dF(z,y)+\theta\lambda(t,y)\int_0^{+\infty} (z - z\land\alpha)^2\,dF(z,y) ,
\end{equation}
for some safety loading $\theta>0$.
\end{Example}

From now on we assume the following condition:
\begin{equation}
\label{eqn:q_exp}
\mathbb{E}\biggl[e^{\eta \int_0^T e^{r (T- s)}q(s,Y_s,0)\,ds}\biggr]<+\infty
\end{equation}

Furthermore, the insurer can lend or borrow money at a fixed interest rate $r\in\mathbb{R}$. More precisely, every time the surplus is positive, the insurer lends it and earns interest income if $r>0$ (or pays interest expense if $r<0$); on the contrary, when the surplus becomes negative, the insurer borrows money and pays interest expense (or gains interest income if $r<0$).

Under these assumptions, the total wealth dynamic associated with a given strategy $\alpha$ is described by the following SDE:
\begin{equation}
\label{eqn:wealth_SDE}
dX^\alpha_t = dR^\alpha_t + rX^\alpha_t\,dt ,  \qquad X^\alpha_0=R_0 .
\end{equation}
It can be verified that the solution to~\eqref{eqn:wealth_SDE} is given by the following expression:
\begin{equation}
\label{eqn:wealth_sol}
X^\alpha_t = R_0 e^{r t} + \int_0^t e^{r (t- s)}\bigl[c(s,Y_s)-q(s,Y_s, \alpha_s)\bigr]\,ds
-\int_0^t\int_0^{+ \infty}  e^{ r (t-s)}(z\land\alpha_s)\,m(ds,dz) .
\end{equation}

Our aim is to find the optimal strategy $\alpha$ in order to maximize the expected exponential utility of the terminal wealth, that is
\[
\sup_{\alpha\in\mathcal{A}}{\mathbb{E}\biggl[1-e^{-\eta X^\alpha_T}\biggr]}=1-\inf_{\alpha\in\mathcal{A}}{\mathbb{E}\biggl[e^{-\eta X^\alpha_T}\biggr]} ,
\]
where $\eta>0$ is the risk-aversion parameter and $\mathcal{A}$ is the set of all admissible strategies as defined below.

\begin{Definition}
\label{def:admissible_strategies}
We denote by $\mathcal{A}$ the set of all admissible strategies, that is the class of all non negative $\{\mathcal{F}_t\}_{t\in[0,T]}$-predictable processes $\alpha_t$. With the notation $\mathcal{A}_t$ we refer to the same class, restricted to the strategies starting from $t\in[0,T]$.
\end{Definition}

\begin{Remark}
Observe that the condition~\eqref{eqn:q_exp} implies $\mathbb{E}\bigl[e^{-\eta X^\alpha_T}\bigr]<+\infty$ $\forall\alpha\in\mathcal{A}$. In fact we have that
\begin{align*}
\mathbb{E}\bigl[e^{-\eta X^\alpha_T}\bigr]&=\mathbb{E}\biggl[e^{-\eta R_0 e^{rT} -\eta \int_0^T e^{r (T- s)}\bigl[c(s,Y_s)-q(s,Y_s,\alpha_s)\bigr]\,ds-\eta\int_0^T\int_0^{+ \infty}  e^{ r (T-s)}(z\land\alpha_s)\,m(ds,dz)}\biggr]\\
&\le \mathbb{E}\biggl[e^{\eta \int_0^T e^{r (T- s)}q(s,Y_s,0)\,ds}\biggr].
\end{align*}
\end{Remark}

As usual in stochastic control problems, we focus on the corresponding dynamic problem:
\begin{equation}
\label{eqn:dynamic_problem}
\essinf_{\alpha\in\mathcal{A}_t}{\mathbb{E}\biggl[e^{-\eta X^\alpha_{t,x}(T)}\mid\mathcal{F}_t\biggr]} , 
\qquad t\in[0,T] ,
\end{equation}
where $X^\alpha_{t,x}(T)$ denotes the insurer's wealth process starting from $(t,x)\in[0,T]\times\mathbb{R}$ evaluated at time $T$.


\section{HJB formulation}
\label{section:HJB}

In order to solve the optimization problem~\eqref{eqn:dynamic_problem}, we introduce the \textit{value function} $v:[0,T]\times\mathbb{R}^2\to(0,+\infty)$ associated with it, that is
\begin{equation}
\label{eqn:value_fun}
v(t,x,y)\doteq\inf_{\alpha\in\mathcal{A}_t}{\mathbb{E}\biggl[e^{-\eta X^\alpha_{t,x}(T)}\mid Y_t=y \biggr]} .
\end{equation}
This function is expected to solve the Hamilton-Jacobi-Bellman (HJB) equation:
\begin{equation}
\label{eqn:HJBformulation_prop}
\left\{
\begin{aligned}
&\inf_{\alpha\in [0,+\infty)}{\mathcal{L}^\alpha v(t,x,y)} = 0
\qquad \forall(t,x,y)\in[0,T]\times\mathbb{R}^2\\
& v(T,x,y)= e^{-\eta x} \qquad \forall(x,y)\in\mathbb{R}^2 ,
\end{aligned}
\right.
\end{equation}
where $\mathcal{L}^\alpha$ denotes the Markov generator of the couple $(X^\alpha_t,Y_t)$ associated with a constant control $\alpha$.
In what follows, we denote by $\mathcal{C}^{1,2}_b$ the class of all bounded functions $f(t,x_1,\dots,x_n)$, with $n\ge1$, with bounded first order derivatives $\frac{\partial{f}}{\partial{t}},\frac{\partial{f}}{\partial{x_1}},\dots,\frac{\partial{f}}{\partial{x_n}}$ and bounded second order derivatives with respect to the spatial variables $\frac{\partial^2{f}}{\partial{x_1^2}},\dots,\frac{\partial{f}}{\partial{x_n^2}}$.

\begin{Lemma}
\label{lemmagenerator}
Let $f:[0,T]\times\mathbb{R}^2\to\mathbb{R}$ be a function in $\mathcal{C}^{1,2}_b$. The Markov generator of the stochastic process $(X^\alpha_t,Y_t)$ for all constant strategies $\alpha\in[0,+\infty)$ is given by the following expression:
\begin{multline}
\label{eqn:generator}
\mathcal{L}^\alpha f(t,x,y) = \frac{\partial{f}}{\partial{t}}(t,x,y)
+ \frac{\partial{f}}{\partial{x}}(t,x,y) \bigl[rx+c(t,y)-q(t,y,\alpha)\bigr]
+ b(t,y)\frac{\partial{f}}{\partial{y}}(t,x,y)\\
+ \frac{1}{2}\gamma(t,y)^2\frac{\partial^2{f}}{\partial{y^2}}(t,x,y) 
+ \int_0^{+\infty}{\biggl[f(t,x-z\land\alpha,y)-f(t,x,y)\biggr]\lambda(t,y)\,dF(z,y)} .
\end{multline}
\end{Lemma}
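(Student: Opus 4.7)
The plan is to derive the generator by applying Itô's formula for jump-diffusions to $f(t,X^\alpha_t,Y_t)$ under a constant strategy $\alpha$, rearranging the resulting semimartingale decomposition into a drift part (which will be $\mathcal{L}^\alpha f$) plus a local martingale, and then taking conditional expectations to isolate the generator in the standard way.

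First I would fix $(t,x,y)\in[0,T)\times\mathbb{R}^2$, start the processes from $(X^\alpha_t,Y_t)=(x,y)$, and write the SDEs for $X^\alpha$ and $Y$ separately. From~\eqref{eqn:wealth_sol} (or directly from~\eqref{eqn:wealth_SDE} together with the definition of $R^\alpha$), when $\alpha$ is constant,
\begin{equation*}
dX^\alpha_s = \bigl[rX^\alpha_{s^-}+c(s,Y_s)-q(s,Y_s,\alpha)\bigr]\,ds - \int_0^{+\infty}(z\wedge\alpha)\,m(ds,dz),
\end{equation*}
while $Y$ satisfies~\eqref{eqn:stochasticfactor}. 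Observe that $X^\alpha$ has no continuous martingale part (it is driven only by the marked point process and by an absolutely continuous drift), hence the quadratic covariation between the continuous parts of $X^\alpha$ and $Y$ vanishes, so no mixed second-order term $\frac{\partial^2 f}{\partial x\partial y}$ appears in the Itô expansion.

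Applying Itô's formula for semimartingales with jumps to $f(s,X^\alpha_s,Y_s)$ and using $\Delta X^\alpha_s = -(Z_n\wedge\alpha)$ on $\{s=T_n\}$ while $Y$ is continuous, I obtain
\begin{align*}
f(s,X^\alpha_s,Y_s) - f(t,x,y)
&= \int_t^s \mathcal{L}^\alpha f(u,X^\alpha_{u^-},Y_u)\,du \\
&\quad + \int_t^s \gamma(u,Y_u)\frac{\partial f}{\partial y}(u,X^\alpha_{u^-},Y_u)\,dW^{(Y)}_u \\
&\quad + \int_t^s\!\!\int_0^{+\infty}\!\!\bigl[f(u,X^\alpha_{u^-}-z\wedge\alpha,Y_u)-f(u,X^\alpha_{u^-},Y_u)\bigr]\bigl(m(du,dz)-\nu(du,dz)\bigr),
\end{align*}
where the compensator part of the jump term has been absorbed into $\mathcal{L}^\alpha f$ via~\eqref{eqn:dual_projection}. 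This is precisely the expression in~\eqref{eqn:generator}, with the jump integrand coming from the compensator of the pure-jump component of $X^\alpha$.

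The remaining step is to verify that the two stochastic integrals are genuine martingales, so that taking $\mathbb{E}_{t,x,y}$, dividing by $s-t$, and sending $s\downarrow t$ yields $\mathcal{L}^\alpha f(t,x,y)$ as the infinitesimal generator. Since $f\in\mathcal{C}^{1,2}_b$, both $\frac{\partial f}{\partial y}$ and the jump integrand $f(\cdot,x-z\wedge\alpha,\cdot)-f(\cdot,x,\cdot)$ are bounded; thus $\mathbb{E}\bigl[\int_t^T \gamma(u,Y_u)^2\bigl(\frac{\partial f}{\partial y}\bigr)^2 du\bigr]<\infty$ by~\eqref{eqn:solutionY}, while the compensated jump integral is an $\{\mathcal{F}_t\}$-martingale by Remark~\ref{remark:random_measure} applied with $H(u,z)=f(u,X^\alpha_{u^-}-z\wedge\alpha,Y_u)-f(u,X^\alpha_{u^-},Y_u)$, whose integrability against $\lambda(u,Y_u)\,dF(z,Y_u)\,du$ is ensured by~\eqref{eqn:projection_integrable} together with boundedness of $f$. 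The only mild technical point is therefore this bookkeeping of integrability for the jump-martingale part; everything else is routine once the semimartingale decomposition has been written.
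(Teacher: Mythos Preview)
Your proposal is correct and follows essentially the same approach as the paper: apply It\^o's formula for jump-diffusions to $f(t,X^\alpha_t,Y_t)$, identify the drift as $\mathcal{L}^\alpha f$, and verify that the remaining stochastic integrals are true martingales using the boundedness of $f$ and its derivatives together with conditions~\eqref{eqn:solutionY} and~\eqref{eqn:projection_integrable}. You add a bit more explanation (the absence of a mixed second-order term, the limiting procedure to read off the generator), but the argument is the same.
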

\begin{proof}
For any $f\in\mathcal{C}^{1,2}_b$, applying It\^o's formula to the stochastic process $f(t,X^\alpha_t,Y_t)$, we get the following expression:
\[
f(t,X^\alpha_t,Y_t)=f(0,X^\alpha_0,Y_0)+\int_0^t\mathcal{L}^\alpha f(s,X^\alpha_s,Y_s)\,ds + M_t ,
\]
where $\mathcal{L}^\alpha$ is defined in~\eqref{eqn:generator} and
\begin{multline*}
M_t=\int_0^t\gamma(s,Y_s)\frac{\partial{f}}{\partial{y}}(s,X^\alpha_s,Y_s)\,dW^{(Y)}_s\\
+\int_0^t\int_0^{+\infty}{\biggl(f(s,X^\alpha_s-z\land\alpha,Y_s)-f(s,X^\alpha_s,Y_s)\biggr)\bigl(m(ds,dz)-\nu(ds,dz)\bigr)} .
\end{multline*}
In order to complete the proof, we have to show that $\{M_t\}_{t\in[0,T]}$ is an $\{\mathcal{F}_t\}_{t\in[0,T]}$-martingale. For the first term, we observe that
\[
\mathbb{E}\biggl[\int_0^t\gamma(s,Y_s)^2\biggl(\frac{\partial{f}}{\partial{y}}(s,X^\alpha_s,Y_s)\biggr)^2\,ds\biggr]<\infty ,
\]
because the partial derivative is bounded and using the assumption~\eqref{eqn:solutionY}. For the second term, it is sufficient to use the boundedness of $f$ and the condition~\eqref{eqn:projection_integrable}.
\end{proof}

\begin{Remark}
Since the couple $(X^\alpha_t,Y_t)$ is a Markov process, any Markovian control is of the form $\alpha_t=\alpha(t,X^\alpha_t,Y_t)$, where $\alpha(t,x,y)$ denotes a suitable function. The generator $\mathcal{L}^\alpha f(t,x,y)$ associated to a general Markovian strategy can be easily obtained by replacing $\alpha$ with $\alpha_t$ in~\eqref{eqn:generator}.
\end{Remark}

In order to simplify our optimization problem, we present a preliminary result.
\begin{Remark}
\label{remark:survival_F}
Let $g:\mathbb{R}\mapsto[0,+\infty)$ be an integrable function such that $g(0)=0$. For any $\alpha\in[0,+\infty)$, the following equation holds true:
\[
\int_0^{+\infty}{g(z\land\alpha)\,dF(z,y)} = \int_0^{\alpha}{g'(z)\bar{F}(z,y)\,dz}
\qquad\forall y\in\mathbb{R} ,
\]
where $\bar{F}(z,y)\doteq 1-F(z,y)$. In fact, by integration by parts we get that
\begin{align}
\int_0^{+\infty}{g(z\land\alpha)\,dF(z,y)} &= \int_0^{\alpha}{g(z)\,dF(z,y)} + \int_{\alpha}^{+\infty}{g(\alpha)\,dF(z,y)} \notag\\
	&= g(\alpha)F(\alpha,y)- g(0)F(0,y) -\int_0^{\alpha}{g'(z)F(z,y)\,dz} + g(\alpha)[1-F(\alpha,y)] \notag\\
	&= -\int_0^{\alpha}{g'(z)F(z,y)\,dz} + \int_0^{\alpha}{g'(z)\,dz} \notag\\
	&= \int_0^{\alpha}{g'(z)(1-F(z,y))\,dz} .
\end{align}
\end{Remark}

Now let us consider the ansatz $v(t,x,y) = e^{-\eta xe^{r(T-t)}}\varphi(t,y)$, which is motivated by the following proposition.
 \begin{Proposition}
\label{prop:phi_to_v}
Let us suppose that there exists a function $\varphi:[0,T]\times\mathbb{R}\to(0,+\infty)$ solution to the following Cauchy problem:
\begin{multline}
\label{eqn:hjb_prop}
\frac{\partial{\varphi}}{\partial{t}}(t,y) + b(t,y)\frac{\partial{\varphi}}{\partial{y}}(t,y) + \frac{1}{2}\gamma(t,y)^2 \frac{\partial^2{\varphi}}{\partial{y^2}}(t,y)\\
+ \eta e^{ r (T-t)}\varphi(t,y)\bigl[-c(t,y)+\inf_{\alpha\in[0,+\infty)}{\Psi^{\alpha}(t,y)}\bigr]= 0 ,
\end{multline}
with final condition $\varphi(T,y) = 1$, $\forall y\in\mathbb{R}$,  where 
\begin{equation}
\label{eqn:Psi}
\Psi^{\alpha}(t,y) \doteq q(t,y,\alpha) + \lambda(t,y)\int_0^\alpha{e^{\eta ze^{r(T-t)}}\bar{F}(z,y)\,dz} ,
\qquad\alpha\in[0,+\infty) .
\end{equation}
Then the function
\begin{equation}
\label{eqn:v_from_phi}
v(t,x,y) = e^{-\eta xe^{r(T-t)}}\varphi(t,y)
\end{equation}
solves the HJB problem given in~\eqref{eqn:HJBformulation_prop}.
 \end{Proposition}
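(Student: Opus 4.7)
The plan is a direct substitution: plug the ansatz $v(t,x,y) = e^{-\eta x e^{r(T-t)}}\varphi(t,y)$ into the generator $\mathcal{L}^\alpha$ from Lemma~\ref{lemmagenerator} and show that the resulting expression factors cleanly so that the HJB equation reduces to the Cauchy problem for $\varphi$.

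First I would compute the required partial derivatives of $v$. Writing $A(t)\doteq e^{r(T-t)}$ for brevity, one has $\partial_t v = \eta r x A(t)\, e^{-\eta x A(t)}\varphi + e^{-\eta x A(t)}\partial_t\varphi$, $\partial_x v = -\eta A(t)\, e^{-\eta x A(t)}\varphi$, $\partial_y v = e^{-\eta x A(t)}\partial_y\varphi$ and $\partial_{yy} v = e^{-\eta x A(t)}\partial_{yy}\varphi$. The integral term is
\[
\int_0^{+\infty}\!\!\bigl[v(t,x-z\wedge\alpha,y)-v(t,x,y)\bigr]\lambda(t,y)\,dF(z,y)
= e^{-\eta x A(t)}\varphi(t,y)\lambda(t,y)\!\int_0^{+\infty}\!\!\bigl[e^{\eta(z\wedge\alpha)A(t)}-1\bigr]dF(z,y).
\]
Applying Remark~\ref{remark:survival_F} with $g(z)=e^{\eta z A(t)}-1$ (so $g(0)=0$ and $g'(z)=\eta A(t)\, e^{\eta z A(t)}$) rewrites this last integral as $\eta A(t)\!\int_0^\alpha e^{\eta z A(t)}\bar F(z,y)\,dz$.

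Next I would substitute into~\eqref{eqn:generator}. The term $\eta r x A(t)\varphi$ coming from $\partial_t v$ cancels exactly with the $rx$-part of $\partial_x v\cdot[rx+c-q]$. After factoring out the positive quantity $e^{-\eta x A(t)}$, one obtains
\[
\mathcal{L}^\alpha v(t,x,y) = e^{-\eta x A(t)}\!\left\{\partial_t\varphi + b\,\partial_y\varphi + \tfrac{1}{2}\gamma^2\,\partial_{yy}\varphi + \eta A(t)\varphi\bigl[-c(t,y)+\Psi^\alpha(t,y)\bigr]\right\},
\]
where $\Psi^\alpha$ is precisely the quantity defined in~\eqref{eqn:Psi}. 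Since the first four terms inside the braces do not depend on $\alpha$, and $\eta A(t)\varphi(t,y)>0$, taking the infimum over $\alpha\in[0,+\infty)$ exchanges with the sum and acts only on $\Psi^\alpha$, giving
\[
\inf_{\alpha\in[0,+\infty)}\mathcal{L}^\alpha v(t,x,y) = e^{-\eta x A(t)}\!\left\{\partial_t\varphi + b\,\partial_y\varphi + \tfrac{1}{2}\gamma^2\,\partial_{yy}\varphi + \eta A(t)\varphi\bigl[-c(t,y)+\inf_{\alpha}\Psi^\alpha(t,y)\bigr]\right\},
\]
which vanishes by the PDE~\eqref{eqn:hjb_prop} satisfied by $\varphi$. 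Finally, the terminal condition $v(T,x,y)=e^{-\eta x}\varphi(T,y)=e^{-\eta x}$ follows from $\varphi(T,\cdot)\equiv 1$.

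There is no real obstacle here; the proof is an algebraic verification. The only place demanding care is the non-local integral term, where Remark~\ref{remark:survival_F} must be applied with the right choice of $g$ to recognise the resulting expression as part of $\Psi^\alpha$, and one must verify that the $rx$-contributions cancel so that the $x$-dependence survives only through the overall exponential factor. Positivity of $\varphi$ (assumed in the statement) is what allows the infimum to be pulled through to $\Psi^\alpha$ alone.
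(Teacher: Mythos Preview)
Your proposal is correct and follows essentially the same route as the paper: substitute the ansatz into the generator \eqref{eqn:generator}, use Remark~\ref{remark:survival_F} with $g(z)=e^{\eta z e^{r(T-t)}}-1$ to rewrite the jump integral, factor out the exponential in $x$, and identify the bracketed expression with the PDE~\eqref{eqn:hjb_prop}. You are in fact slightly more explicit than the paper in pointing out the cancellation of the $rx$-terms and the role of $\varphi>0$ in passing the infimum through to $\Psi^\alpha$.
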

\begin{proof}
From the expression~\eqref{eqn:v_from_phi} we can easily verify that
\begin{align*}
e^{\eta xe^{r(T-t)}}\mathcal{L}^\alpha v(t,x,y)
&=\frac{\partial{\varphi}}{\partial{t}}(t,y)-\eta e^{ r (T-t)}\varphi(t,y)\bigl[c(t,y)-q(t,y,\alpha)\bigr]\\
&+ b(t,y)\frac{\partial{\varphi}}{\partial{y}}(t,y)+ \frac{1}{2}\gamma(t,y)^2 \frac{\partial^2{\varphi}}{\partial{y^2}}(t,y)\\
&+ \int_0^{+\infty}{\biggl[e^{\eta (z\land\alpha) e^{r(T-t)}}\varphi(t,y)-\varphi(t,y)\biggr]\lambda(t,y)\,dF(z,y)} .
\end{align*}
By Remark~\ref{remark:survival_F}, taking $g(z)=e^{\eta ze^{r(T-t)}}-1$, we can rewrite the last integral in this more convenient way:
\begin{align*}
&\varphi(t,y)\lambda(t,y)\int_0^{+\infty}{\biggl[e^{\eta (z\land\alpha)e^{r(T-t)}}-1\biggr]\,dF(z,y)}=\varphi(t,y)\lambda(t,y)\int_0^{\alpha}{\eta e^{r(T-t)}e^{\eta ze^{r(T-t)}}\,\bar{F}(z,y)\,dz} .
\end{align*}
Now we define $\Psi^{\alpha}(t,y)$ by means of the equation~\eqref{eqn:Psi}, obtaining the following equivalent expression:
\begin{align*}
e^{\eta xe^{r(T-t)}}\mathcal{L}^\alpha v(t,x,y)
&=\frac{\partial{\varphi}}{\partial{t}}(t,y)-\eta e^{ r (T-t)}\varphi(t,y)c(t,y)\\
&+ b(t,y)\frac{\partial{\varphi}}{\partial{y}}(t,y)+ \frac{1}{2}\gamma(t,y)^2 \frac{\partial^2{\varphi}}{\partial{y^2}}(t,y)+\eta e^{r (T-t)}\varphi(t,y)\Psi^{\alpha}(t,y) .
\end{align*}
Taking the infimum over $\alpha\in[0,+\infty)$, by~\eqref{eqn:hjb_prop} we find out the PDE in~\eqref{eqn:HJBformulation_prop}. The terminal condition in~\eqref{eqn:HJBformulation_prop} immediately follows by definition.
\end{proof}

The previous result suggests to focus on the minimization of the function~\eqref{eqn:Psi}, that is the aim of the next section.


\section{Optimal reinsurance strategy}
\label{section:reinsurance}

In this section we study the following minimization problem:
\begin{equation}
\label{eqn:optimal_reins_pb}
\inf_{\alpha\in[0,+\infty)}{\Psi^{\alpha}(t,y)},
\end{equation}
where $\Psi^{\alpha}(t,y):[0,T]\times\mathbb{R}\to(0,+\infty)$ is defined in~\eqref{eqn:Psi}.

In particular, we provide a complete characterization of the optimal reinsurance strategy. In the sequel we assume $0\le F(z,y)<1$ $\forall(z,y)\in[0,+\infty)\times\mathbb{R}$.

 \begin{Proposition}
\label{prop:optimal_XL}
Let us suppose that $\Psi^{\alpha}(t,y)$ is strictly convex in $\alpha\in[0,+\infty)$ and let us define the set $A_0\subseteq[0,T]\times\mathbb{R}$ as follows:
\begin{equation}
\label{eqn:set_A0}
A_0\doteq\Set{(t,y)\in [0,T]\times\mathbb{R}\mid-\frac{\partial{q(t,y,0)}}{\partial{\alpha}}\le\lambda(t,y)}.
\end{equation}
If the equation
\begin{equation}
\label{eqn:first_order_cond}
-\frac{\partial{q(t,y,\alpha)}}{\partial{\alpha}}=\lambda(t,y)e^{\eta \alpha e^{r(T-t)}}\bar{F}(\alpha,y)
\end{equation}
admits at least one solution in $(0,+\infty)$ for any $(t,y)\in [0,T]\times\mathbb{R}\setminus A_0$, denoted by $\hat{\alpha}(t,y)$, then the minimization problem~\eqref{eqn:optimal_reins_pb} admits a unique solution $\alpha^*(t,y)\in[0,+\infty)$ given by
\begin{equation}
\label{eqn:optimal_solution}
\alpha^*(t,y)=
\begin{cases}
	0 & \text{$(t,y)\in A_0$}
	\\
	\hat{\alpha}(t,y) & \text{$(t,y)\in [0,T]\times\mathbb{R}\setminus A_0$.}
\end{cases}
\end{equation}
 \end{Proposition}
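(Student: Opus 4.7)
The plan is to reduce the problem to a one-dimensional convex minimization over $\alpha \in [0,+\infty)$ by computing $\partial_\alpha \Psi^\alpha(t,y)$, analyzing its sign at the boundary $\alpha=0$, and then invoking strict convexity to characterize the minimizer.

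First I would differentiate $\Psi^\alpha(t,y)$ in $\alpha$. Using the fundamental theorem of calculus on the integral term in \eqref{eqn:Psi}, one gets
\begin{equation*}
\frac{\partial \Psi^\alpha}{\partial \alpha}(t,y) = \frac{\partial q(t,y,\alpha)}{\partial \alpha} + \lambda(t,y)\, e^{\eta \alpha e^{r(T-t)}}\, \bar{F}(\alpha,y),
\end{equation*}
which is continuous in $\alpha$ on $[0,+\infty)$ (interpreted as a right derivative at $0$, in line with Assumption~\ref{def:xl_premium}). Evaluating at $\alpha=0$, and using $\bar F(0,y)=1-F(0,y)=1$, yields
\begin{equation*}
\frac{\partial \Psi^\alpha}{\partial \alpha}(t,y)\bigg|_{\alpha=0} = \frac{\partial q(t,y,0)}{\partial \alpha} + \lambda(t,y),
\end{equation*}
whose sign is nonnegative precisely on the set $A_0$ defined in \eqref{eqn:set_A0}.

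Next I would split into two cases. If $(t,y)\in A_0$, the right derivative at $0$ is nonnegative; by the assumed strict convexity of $\Psi^\alpha$ in $\alpha$, $\partial_\alpha \Psi^\alpha$ is strictly increasing in $\alpha$, so $\partial_\alpha \Psi^\alpha(t,y)\ge 0$ for every $\alpha\ge 0$. Hence $\alpha\mapsto\Psi^\alpha(t,y)$ is nondecreasing on $[0,+\infty)$, and the unique minimizer is $\alpha^*(t,y)=0$. If instead $(t,y)\notin A_0$, the right derivative at $0$ is strictly negative, so $\Psi^\alpha(t,y)$ is strictly decreasing near $\alpha=0$; any $\hat\alpha(t,y)\in(0,+\infty)$ satisfying \eqref{eqn:first_order_cond} is a point where $\partial_\alpha \Psi^\alpha$ vanishes, i.e.\ a critical point of a strictly convex function, hence a global minimizer. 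Uniqueness of such $\hat\alpha(t,y)$ follows from the strict monotonicity of $\partial_\alpha \Psi^\alpha$ implied by strict convexity: at most one $\alpha$ can satisfy the first-order condition. Combining the two cases yields the representation \eqref{eqn:optimal_solution}.

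There is no serious analytical obstacle: the only care points are (i) to interpret $\partial_\alpha q(t,y,0)$ as a right derivative so that the boundary case $\alpha=0$ is handled cleanly, and (ii) to make sure that when $(t,y)\notin A_0$ the hypothesis that \eqref{eqn:first_order_cond} admits a solution in $(0,+\infty)$ is used not only for existence of $\hat\alpha(t,y)$ but, together with strict convexity, also for uniqueness. The overall structure is essentially a textbook convex minimization argument once the derivative of $\Psi^\alpha$ has been written down explicitly.
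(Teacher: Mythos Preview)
Your proposal is correct and follows essentially the same approach as the paper: compute $\partial_\alpha\Psi^\alpha(t,y)$, read off the sign of the right derivative at $\alpha=0$ to distinguish the two cases via $A_0$, and then use strict convexity to conclude that the minimizer is either the boundary point $0$ or the (necessarily unique) interior stationary point $\hat\alpha(t,y)$. The only cosmetic refinement would be to note explicitly that on $A_0$ the strict monotonicity of $\partial_\alpha\Psi^\alpha$ gives $\partial_\alpha\Psi^\alpha>0$ for all $\alpha>0$, so $0$ is indeed the \emph{unique} minimizer, not merely a minimizer.
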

\begin{proof}
The function $\Psi^{\alpha}(t,y)$ is continuous in $\alpha\in[0,+\infty)$ by definition (see Assumption~\ref{def:xl_premium}) and for any $(t,y)\in[0,T]\times\mathbb{R}$ its derivative is given by the following expression:
\begin{equation}
\label{derivata}
\frac{\partial{\Psi^{\alpha}(t,y)}}{\partial{\alpha}}=\frac{\partial{q(t,y,\alpha)}}{\partial{\alpha}}
+\lambda(t,y)e^{\eta \alpha e^{r(T-t)}}\bar{F}(\alpha,y).
\end{equation}
Since $\Psi^{\alpha}(t,y)$ is convex in $\alpha\in[0,+\infty)$ by hypothesis, if $(t,y)\in A_0$ then $\frac{\partial{\Psi^{0}(t,y)}}{\partial{\alpha}}\ge0$, then $\alpha^*(t,y)=0$, because the derivative $\frac{\partial{\Psi^{\alpha}(t,y)}}{\partial{\alpha}}$ is increasing in $\alpha$ and there is no stationary point in $(0,+\infty)$. Else, if $(t,y)\in [0,T]\times\mathbb{R}\setminus A_0$ then $\frac{\partial{\Psi^{0}(t,y)}}{\partial{\alpha}}<0$, and $\alpha^*(t,y)$ coincides with the unique stationary point of $\Psi^{\alpha}(t,y)$, which is $\hat{\alpha}(t,y)\in(0,+\infty)$. Let us notice that it exists by hypothesis and it is unique because $\Psi^{\alpha}(t,y)$ is strictly convex.
\end{proof}

By the previous proposition, we observe that $\lambda(t,y)$ is an important threshold for the insurer: as long as the marginal cost of the full reinsurance falls in the interval $(0,\lambda(t,y)]$, the optimal choice is full reinsurance.

Unfortunately, it is not always easy to check whether $\Psi^{\alpha}(t,y)$ is strictly convex in $\alpha\in[0,+\infty)$ or not. In the next result such an hypothesis is relaxed, while the uniqueness of the solution to \eqref{eqn:first_order_cond} is required.

 \begin{Proposition}
\label{eqn:optimal_solution2}
Suppose that the equation~\eqref{eqn:first_order_cond} admits a unique solution $\hat{\alpha}(t,y)\in(0,+\infty)$ for any $(t,y)\in [0,T]\times\mathbb{R}\setminus A_0$. Moreover, let us assume that
\begin{equation}
\label{eqn:psi_convex_d2}
\frac{\partial^2{q(t,y,\hat{\alpha}(t,y))}}{\partial{\alpha^2}}
>-\lambda(t,y)e^{\eta\hat{\alpha}(t,y) e^{r(T-t)}}\frac{\partial{\bar{F}(\hat{\alpha}(t,y),y)}}{\partial{z}}
\quad \forall(t,y)\in [0,T]\times\mathbb{R}\setminus A_0.
\end{equation}
Then the minimization problem~\eqref{eqn:optimal_reins_pb} admits a unique solution $\alpha^*(t,y)\in[0,+\infty)$ given by \eqref{eqn:optimal_solution}.
 \end{Proposition}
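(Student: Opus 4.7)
The plan is to follow the two-case structure of the proof of Proposition~\ref{prop:optimal_XL}, replacing the global strict convexity of $\alpha\mapsto\Psi^\alpha(t,y)$ by the localized second-order condition~\eqref{eqn:psi_convex_d2}. This should produce a strict local minimum at $\hat{\alpha}$, which can then be promoted to a global minimum via the uniqueness of the interior stationary point and the intermediate value theorem applied to the continuous function $\alpha\mapsto\frac{\partial\Psi^\alpha(t,y)}{\partial\alpha}$.

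First, I would differentiate~\eqref{derivata} once more in $\alpha$ to obtain
\[
\frac{\partial^2\Psi^\alpha(t,y)}{\partial\alpha^2}
=\frac{\partial^2 q(t,y,\alpha)}{\partial\alpha^2}
+\lambda(t,y)\,\eta\,e^{r(T-t)}e^{\eta\alpha e^{r(T-t)}}\bar{F}(\alpha,y)
+\lambda(t,y)\,e^{\eta\alpha e^{r(T-t)}}\frac{\partial\bar{F}(\alpha,y)}{\partial z}.
\]
Evaluated at $\alpha=\hat{\alpha}(t,y)$, the sum of the first and third summands is strictly positive by~\eqref{eqn:psi_convex_d2}, while the middle summand is strictly positive because $\lambda(t,y),\eta,e^{r(T-t)},e^{\eta\hat{\alpha}e^{r(T-t)}}$ and $\bar{F}(\hat{\alpha},y)$ are all strictly positive. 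Hence $\frac{\partial^2\Psi^{\hat{\alpha}}}{\partial\alpha^2}(t,y)>0$, so $\hat{\alpha}$ is a strict local minimizer of $\alpha\mapsto\Psi^\alpha(t,y)$.

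Next, I would split on $A_0$. For $(t,y)\notin A_0$, by definition of $A_0$ we have $\frac{\partial\Psi^0(t,y)}{\partial\alpha}=\frac{\partial q(t,y,0)}{\partial\alpha}+\lambda(t,y)<0$. Since $\hat{\alpha}$ is by hypothesis the unique zero of the continuous function $\alpha\mapsto\frac{\partial\Psi^\alpha(t,y)}{\partial\alpha}$ in $(0,+\infty)$, the IVT forces $\frac{\partial\Psi^\alpha}{\partial\alpha}<0$ on $[0,\hat{\alpha})$, while the strict local-min property at $\hat{\alpha}$ combined with the absence of further zeros yields $\frac{\partial\Psi^\alpha}{\partial\alpha}>0$ on $(\hat{\alpha},+\infty)$. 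Thus $\Psi^\alpha$ is strictly decreasing on $[0,\hat{\alpha}]$ and strictly increasing on $[\hat{\alpha},+\infty)$, so $\hat{\alpha}$ is the unique global minimum. For $(t,y)\in A_0$, we have $\frac{\partial\Psi^0}{\partial\alpha}\ge 0$; reading the uniqueness hypothesis as also precluding any zero of $\frac{\partial\Psi^\alpha}{\partial\alpha}$ in $(0,+\infty)$ when $(t,y)\in A_0$, continuity together with the IVT rules out any sign change, so $\frac{\partial\Psi^\alpha}{\partial\alpha}\ge 0$ on $[0,+\infty)$ and $\alpha^*=0$ is the minimum.

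The main obstacle is exactly this global monotonicity step: without global convexity I rely entirely on the uniqueness of the interior zero of $\frac{\partial\Psi^\alpha}{\partial\alpha}$ to forbid oscillations on $[0,\hat{\alpha})$ and on $(\hat{\alpha},+\infty)$, and on the strict sign of the second derivative at $\hat{\alpha}$ to switch the sign of the first derivative through $\hat{\alpha}$. The case $(t,y)\in A_0$ is slightly delicate because the proposition's uniqueness hypothesis is phrased only for points outside $A_0$; the argument tacitly needs that~\eqref{eqn:first_order_cond} has no solution in $(0,+\infty)$ when $(t,y)\in A_0$, a natural reading of the assumption that should be made explicit in the write-up.
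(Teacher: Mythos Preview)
Your approach is essentially the same as the paper's: split on $A_0$, compute $\frac{\partial^2\Psi^\alpha}{\partial\alpha^2}$, and use~\eqref{eqn:psi_convex_d2} together with the positivity of the middle term to get $\frac{\partial^2\Psi^{\hat\alpha}}{\partial\alpha^2}>0$, then conclude that $\hat\alpha$ is the minimizer. Your write-up is in fact more careful than the paper's, which simply asserts ``hence $\hat\alpha(t,y)$ is the unique minimizer'' without the IVT/sign-change argument you supply, and which handles the $A_0$ case by a bare reference to the proof of Proposition~\ref{prop:optimal_XL}; your observation that the $A_0$ case tacitly requires the absence of interior stationary points is a genuine gap in the paper's argument as well, so flagging it is appropriate.
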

\begin{proof}
Recalling the proof of Proposition~\ref{prop:optimal_XL}, if $(t,y)\in A_0$ then $\frac{\partial{\Psi^{0}(t,y)}}{\partial{\alpha}}\ge0$ and $\alpha^*(t,y)=0$. For any $(t,y)\in [0,T]\times\mathbb{R}\setminus A_0$, by hypothesis there exists a unique stationary point $\hat{\alpha}(t,y)\in(0,+\infty)$. By simple calculations, using \eqref{eqn:psi_convex_d2} we notice that
\[
\frac{\partial^2{\Psi^{\hat{\alpha}}(t,y)}}{\partial{\alpha^2}}>0,
\]
hence $\hat{\alpha}(t,y)$ is the unique minimizer and this completes the proof.
\end{proof}

The next result deals with the existence of a solution to~\eqref{eqn:first_order_cond}. In particular, it is sufficient to require that the claim size distribution is heavy-tailed, which is a relevant case in non-life insurance (see \cite[Chapter 2]{rolski:insurancefin}), plus a technical condition for the reinsurance premium.

 \begin{Proposition}
\label{prop:existence_suff}
Let us assume that the reinsurance premium $q(t,y,\alpha)$ is such that%
\footnote{E.g. if $q$ is convex in $\alpha$.}
\[
\lim_{\alpha\to+\infty}{\frac{\partial{q(t,y,\alpha)}}{\partial{\alpha}}}=l \in\mathbb{R}
\]
and the claim size distribution is heavy-tailed in this sense:
\[
\int_0^{+\infty}e^{kz}\,dF(z,y)=+\infty \qquad \forall k>0,y\in\mathbb{R}.
\]
Then, for any $(t,y)\in [0,T]\times\mathbb{R}\setminus A_0$, the equation~\eqref{eqn:first_order_cond} admits at least one solution in $(0,+\infty)$.
 \end{Proposition}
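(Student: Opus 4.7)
The plan is to apply the intermediate value theorem to
\[
G(\alpha) \doteq \frac{\partial \Psi^{\alpha}(t,y)}{\partial \alpha} = \frac{\partial q(t,y,\alpha)}{\partial \alpha} + \lambda(t,y)e^{\eta\alpha e^{r(T-t)}}\bar F(\alpha,y)
\]
on $[0,+\infty)$, since equation~\eqref{eqn:first_order_cond} is precisely $G(\alpha)=0$. Fix $(t,y)\in[0,T]\times\mathbb{R}\setminus A_0$. The function $G$ is continuous in $\alpha$ thanks to the continuity of $\partial q(t,y,\cdot)/\partial\alpha$ from Assumption~\ref{def:xl_premium} and of $\bar F(\cdot,y)$. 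Moreover, the very definition of $A_0$ gives
\[
G(0) = \frac{\partial q(t,y,0)}{\partial \alpha} + \lambda(t,y) < 0,
\]
so by the IVT it suffices to produce some $\tilde\alpha\in(0,+\infty)$ at which $G(\tilde\alpha)>0$.

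First, I would observe that $\partial q(t,y,\alpha)/\partial\alpha$ is uniformly bounded on $[0,+\infty)$: it is continuous, nonpositive by Assumption~\ref{def:xl_premium}, and converges to $l\in\mathbb{R}$ as $\alpha\to+\infty$, hence it stays within some interval $[-K,0]$. Therefore it is enough to show that
\[
H(\alpha)\doteq e^{\eta\alpha e^{r(T-t)}}\,\bar F(\alpha,y)
\]
is unbounded in $\alpha$, and then to pick $\tilde\alpha$ with $\lambda(t,y)\,H(\tilde\alpha)>K$.

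The main obstacle, and the only non-routine step, is to bridge from the moment condition $\int_0^{+\infty}e^{kz}\,dF(z,y)=+\infty$ (for every $k>0$) to the pointwise unboundedness of $H$. I would argue by contradiction: set $k\doteq\eta e^{r(T-t)}>0$ and suppose $H(\alpha)\leq M$ for all $\alpha\geq 0$, i.e. $\bar F(\alpha,y)\leq Me^{-k\alpha}$. Choose $k'\doteq k/2>0$. Integration by parts on $[0,A]$, using $F(0,y)=0$, yields
\[
\int_0^A e^{k'z}\,dF(z,y) = 1 - e^{k'A}\bar F(A,y) + k'\int_0^A e^{k'z}\bar F(z,y)\,dz.
\]
The boundary term satisfies $e^{k'A}\bar F(A,y)\leq Me^{(k'-k)A}\to 0$, while $k'\int_0^{+\infty} e^{k'z}\bar F(z,y)\,dz\leq k'M\int_0^{+\infty} e^{(k'-k)z}\,dz = k'M/(k-k')$; letting $A\to+\infty$ we get $\int_0^{+\infty}e^{k'z}\,dF(z,y)\leq 1+k'M/(k-k')<+\infty$, contradicting the heavy-tail hypothesis at $k'>0$. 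Consequently $\limsup_{\alpha\to+\infty}H(\alpha)=+\infty$, and along a sequence $\alpha_n\uparrow+\infty$ with $H(\alpha_n)\to+\infty$ we have $G(\alpha_n)\geq -K+\lambda(t,y)H(\alpha_n)\to+\infty$. Combining this with $G(0)<0$ and the continuity of $G$, the IVT produces a root $\hat\alpha(t,y)\in(0,+\infty)$ of~\eqref{eqn:first_order_cond}, as required.
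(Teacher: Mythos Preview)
Your proof is correct and follows essentially the same route as the paper: apply the intermediate value theorem to $G(\alpha)=\partial\Psi^{\alpha}(t,y)/\partial\alpha$, using $G(0)<0$ on $A_0^c$ and $G(\alpha)\to+\infty$ for large $\alpha$. The only difference is cosmetic: the paper simply invokes as ``well known'' that the heavy-tail hypothesis forces $e^{kz}\bar F(z,y)\to+\infty$ for every $k>0$, whereas you supply a self-contained integration-by-parts argument yielding the (sufficient) unboundedness $\limsup_{\alpha\to\infty}e^{k\alpha}\bar F(\alpha,y)=+\infty$; your version is in fact the more careful statement, and otherwise the two proofs coincide.
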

\begin{proof}
The following property of heavy-tailed distributions is a well known implication of our assumption:
\[
\lim_{z\to+\infty}{e^{kz}\bar{F}(z,y)}=+\infty \qquad \forall k>0,y\in\mathbb{R}.
\]
Hence, by equation \eqref{derivata},  for any $(t,y)\in[0,T]\times\mathbb{R}\setminus A_0$
\begin{align*}
\lim_{\alpha\to+\infty}{\frac{\partial{\Psi^{\alpha}(t,y)}}{\partial{\alpha}}}
&=\lim_{\alpha\to+\infty}{\biggl[\frac{\partial{q(t,y,\alpha)}}{\partial{\alpha}}
+\lambda(t,y)e^{\eta \alpha e^{r(T-t)}}\bar{F}(\alpha,y)\biggr]}=+\infty.
\end{align*}
On the other hand, we know that
\[
\frac{\partial{\Psi^{0}(t,y)}}{\partial{\alpha}}<0 \qquad \forall(t,y)\in[0,T]\times\mathbb{R}\setminus A_0.
\]
As a consequence, $\frac{\partial{\Psi^{\alpha}(t,y)}}{\partial{\alpha}}$ being continuous in $\alpha\in[0,+\infty)$, there exists $\hat{\alpha}(t,y)\in(0,+\infty)$ such that $\frac{\partial{\Psi^{\hat{\alpha}}(t,y)}}{\partial{\alpha}}=0$.
\end{proof}

Now we turn the attention to the other crucial hypothesis of Proposition~\ref{prop:optimal_XL}, which is the convexity of $\Psi^{\alpha}(t,y)$. The reader can easily observe that the reinsurance premium convexity plays a central role.

 \begin{Proposition}
Suppose that the reinsurance premium $q(t,y,\alpha)$ is convex in $\alpha\in[0,+\infty)$ and $F(z,y)=(1-e^{-\zeta(y) z})\mathbbm{1}_{\{z>0\}}$ for some function $\zeta(y)$ such that $0<\zeta(y)<\eta \min{\{e^{rT},1\}}$ $\forall y\in\mathbb{R}$.
Then the function $\Psi^{\alpha}(t,y)$ defined in~\eqref{eqn:Psi} is strictly convex in $\alpha\in[0,+\infty)$.
 \end{Proposition}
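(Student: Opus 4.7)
The plan is to establish strict convexity by computing $\partial^2 \Psi^{\alpha}/\partial \alpha^2$ explicitly, splitting it into a contribution from $q$ and a contribution from the integral term, and showing the latter is strictly positive under the stated bound on $\zeta(y)$.

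First I would differentiate \eqref{eqn:Psi} twice in $\alpha$. The first derivative is already recorded in \eqref{derivata}, so a second differentiation gives
\begin{equation*}
\frac{\partial^2 \Psi^{\alpha}(t,y)}{\partial \alpha^2}
= \frac{\partial^2 q(t,y,\alpha)}{\partial \alpha^2}
+ \lambda(t,y)\,e^{\eta \alpha e^{r(T-t)}}\left[\eta e^{r(T-t)}\bar{F}(\alpha,y) + \frac{\partial \bar{F}(\alpha,y)}{\partial \alpha}\right].
\end{equation*}
Under the assumed exponential form of $F$, we have $\bar{F}(z,y) = e^{-\zeta(y) z}$ for $z\ge 0$, hence $\partial \bar{F}/\partial z = -\zeta(y)\,e^{-\zeta(y) z}$. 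Substituting yields
\begin{equation*}
\frac{\partial^2 \Psi^{\alpha}(t,y)}{\partial \alpha^2}
= \frac{\partial^2 q(t,y,\alpha)}{\partial \alpha^2}
+ \lambda(t,y)\,e^{\eta \alpha e^{r(T-t)}}\,e^{-\zeta(y)\alpha}\bigl[\eta e^{r(T-t)} - \zeta(y)\bigr].
\end{equation*}

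The first term is nonnegative by convexity of $q$ in $\alpha$, so the whole derivative is strictly positive as long as the bracket is strictly positive. Here is where the hypothesis $0<\zeta(y)<\eta\min\{e^{rT},1\}$ enters. For $t\in[0,T]$, the exponent $r(T-t)$ ranges in $[\min\{rT,0\},\max\{rT,0\}]$, so
\begin{equation*}
\min_{t\in[0,T]} e^{r(T-t)} = \min\{e^{rT},1\},
\end{equation*}
covering both signs of $r$. Therefore $\eta e^{r(T-t)}\ge \eta\min\{e^{rT},1\}>\zeta(y)$, which makes the bracket strictly positive; combined with $\lambda(t,y)>0$ and the strict positivity of the exponentials, this gives $\partial^2 \Psi^{\alpha}/\partial \alpha^2 > 0$ for every $\alpha\in[0,+\infty)$, hence strict convexity.

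No real obstacle arises here: the only subtlety is verifying that $\min\{e^{rT},1\}$ is genuinely the infimum of $e^{r(T-t)}$ over $t\in[0,T]$ for $r$ of either sign, which is a quick monotonicity check. Everything else is a direct computation using $F(z,y)=(1-e^{-\zeta(y)z})\mathbbm{1}_{\{z>0\}}$ and the convexity hypothesis on $q$.
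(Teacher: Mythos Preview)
Your proof is correct and follows essentially the same approach as the paper: both reduce the question to showing that the second derivative of the integral term is strictly positive, compute the bracket $\eta e^{r(T-t)}\bar{F}(\alpha,y)+\partial\bar{F}(\alpha,y)/\partial z$ explicitly for the exponential distribution, and conclude using $\zeta(y)<\eta\min\{e^{rT},1\}$. You simply spell out in more detail why $\min_{t\in[0,T]}e^{r(T-t)}=\min\{e^{rT},1\}$, which the paper leaves implicit.
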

\begin{proof}
Recalling the expression~\eqref{eqn:Psi}, it is sufficient to prove the convexity of the following term:
\[
\int_0^\alpha{e^{\eta ze^{r(T-t)}}\bar{F}(z,y)\,dz}.
\]
For this purpose, let us evaluate its second order derivative:
\[
e^{\eta\alpha e^{r(T-t)}}\biggl(\eta e^{r(T-t)}\bar{F}(\alpha,y)+\frac{\partial{\bar{F}(\alpha,y)}}{\partial{z}}\biggr).
\]
Now the term in brackets is
\[
\eta e^{r(T-t)}e^{-\zeta(y) \alpha}-\zeta(y)e^{-\zeta(y) \alpha}>0 \qquad \forall t\in[0,T].
\]
The proof is complete.
\end{proof}

By Proposition~\ref{prop:F_distribution}, the hypothesis on the claim sizes distribution above may be read as assuming that the claims are exponentially distributed conditionally to $Y$.


\subsection{Expected value principle}

Now we investigate the special case of the expected value principle introduced in Example~\ref{example:evp}.

 \begin{Proposition}
\label{prop:evp}
Under the EVP (see equation~\eqref{eqn:EVP}), the optimal reinsurance strategy $\alpha^*(t)\in[0,+\infty)$ is given by
\begin{equation}
\label{eqn:optimal_reins_EVP}
\alpha^*(t)=e^{-r(T-t)}\frac{\log{(1+\theta)}}{\eta}, \qquad t\in[0,T].
\end{equation}
 \end{Proposition}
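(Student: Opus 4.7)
The plan is to apply the characterization of the optimal retention level obtained in Proposition~\ref{prop:optimal_XL} (or rather, to derive it by direct inspection of the sign of $\partial\Psi^\alpha/\partial\alpha$, which turns out to be the cleanest route here). The key computational ingredient is the derivative of the EVP premium with respect to $\alpha$: using Remark~\ref{remark:survival_F} with $g(z)=z$, one rewrites
\[
q(t,y,\alpha)=(1+\theta)\lambda(t,y)\Bigl[\mathbb{E}[Z\mid Y=y]-\int_0^{\alpha}\bar F(z,y)\,dz\Bigr],
\]
so that $\partial q(t,y,\alpha)/\partial\alpha=-(1+\theta)\lambda(t,y)\bar F(\alpha,y)$.

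Next I would observe that the set $A_0$ defined in~\eqref{eqn:set_A0} is empty under the EVP: indeed $-\partial q(t,y,0)/\partial\alpha=(1+\theta)\lambda(t,y)>\lambda(t,y)$ since $\theta>0$ and $\bar F(0,y)=1$. Hence the optimum, if it exists in $(0,+\infty)$, must solve the first-order equation~\eqref{eqn:first_order_cond}. Substituting the EVP derivative gives
\[
(1+\theta)\lambda(t,y)\bar F(\alpha,y)=\lambda(t,y)e^{\eta\alpha e^{r(T-t)}}\bar F(\alpha,y),
\]
and dividing by the strictly positive quantity $\lambda(t,y)\bar F(\alpha,y)$ (using $\lambda>0$ and $F(z,y)<1$) reduces this to the explicit equation $e^{\eta\alpha e^{r(T-t)}}=1+\theta$, whose unique solution is~\eqref{eqn:optimal_reins_EVP}. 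This also shows that the candidate does not depend on $y$.

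The last step, which is really the only point requiring care, is to justify that this stationary point is the global minimizer of $\Psi^\alpha(t,y)$ on $[0,+\infty)$ without invoking the convexity hypothesis of Proposition~\ref{prop:optimal_XL} (the EVP is a borderline case in which the sufficient condition~\eqref{eqn:psi_convex_d2} of Proposition~\ref{eqn:optimal_solution2} holds only with equality at $\hat\alpha$). The clean way around this is to compute directly
\[
\frac{\partial\Psi^\alpha(t,y)}{\partial\alpha}=\lambda(t,y)\bar F(\alpha,y)\bigl[e^{\eta\alpha e^{r(T-t)}}-(1+\theta)\bigr],
\]
and note that the factor in brackets is strictly increasing in $\alpha$ and vanishes exactly at $\alpha^*(t)$, while $\lambda(t,y)\bar F(\alpha,y)>0$. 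Hence $\partial\Psi^\alpha/\partial\alpha<0$ for $\alpha<\alpha^*(t)$ and $>0$ for $\alpha>\alpha^*(t)$, so $\alpha^*(t)$ is the unique global minimizer, and the proof is complete. The only subtle step worth flagging is this last one, because the reinsurance-premium convexity alone does not give strict convexity of $\Psi^\alpha$ in the EVP case; fortunately the explicit form of the derivative makes monotonicity of the sign transparent.
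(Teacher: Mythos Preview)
Your proof is correct and follows essentially the same route as the paper: rewrite the EVP premium via Remark~\ref{remark:survival_F}, compute $\partial q/\partial\alpha=-(1+\theta)\lambda\bar F$, check $A_0=\emptyset$, and solve the first-order condition to get~\eqref{eqn:optimal_reins_EVP}. The only difference is in the final verification that $\alpha^*(t)$ is a global minimizer: the paper computes $\partial^2\Psi^{\alpha^*}/\partial\alpha^2$ directly and shows it is strictly positive (the extra term $\lambda e^{\eta\alpha^* e^{r(T-t)}}\eta e^{r(T-t)}\bar F(\alpha^*,y)>0$ is what rescues the strict inequality, even though~\eqref{eqn:psi_convex_d2} holds only with equality, as you correctly flag), whereas you factor the first derivative as $\lambda\bar F(\alpha,y)\bigl[e^{\eta\alpha e^{r(T-t)}}-(1+\theta)\bigr]$ and read off the sign change. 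Your argument is slightly cleaner, since it yields global minimality directly without passing through a local second-order test.
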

\begin{proof}
Using Remark~\ref{remark:survival_F}, we can rewrite the equation~\eqref{eqn:EVP} as follows:
\[
q(t,y,\alpha)=(1+\theta)\lambda(t,y)\biggl[\int_0^{+\infty}z\,dF(z,y)-\int_0^{\alpha}\bar{F}(z,y)\,dz\biggr].
\]
As a consequence, we have that
\[
\frac{\partial{q(t,y,\alpha)}}{\partial{\alpha}}=-(1+\theta)\lambda(t,y)\bar{F}(\alpha,y)
\qquad\forall \alpha\in[0,+\infty).
\]
For $\alpha=0$, we have that
\[
\frac{\partial{\Psi^{0}(t,y)}}{\partial{\alpha}}=\frac{\partial{q(t,y,0)}}{\partial{\alpha}}+\lambda(t,y)<0
\qquad\forall(t,y)\in[0,T]\times\mathbb{R},
\]
hence $A_0=\emptyset$ and by Proposition~\ref{prop:optimal_XL} the minimizer belongs to $(0,+\infty)$. Now we look for the stationary points, i.e. the solutions to the equation~\eqref{eqn:first_order_cond}, that in this case reads as follows:
\begin{equation}
\label{eqn:equation_EVP}
(1+\theta)\lambda(t,y)\bar{F}(\alpha,y)=\lambda(t,y)e^{\eta \alpha e^{r(T-t)}}\bar{F}(\alpha,y).
\end{equation}
Solving this equation, we obtain the unique solution given by~\eqref{eqn:optimal_reins_EVP}. In order to prove that it coincides with the unique minimizer to~\eqref{eqn:optimal_reins_pb}, it is sufficient to show that
\[
\frac{\partial^2{\Psi^{\alpha^*(t)}(t,y)}}{\partial{\alpha^2}}>0.
\]
For this purpose, observe that
\begin{align*}
\frac{\partial^2{\Psi^{\alpha^*(t)}(t,y)}}{\partial{\alpha^2}}
&=\frac{\partial^2{q(t,y,\alpha^*(t))}}{\partial{\alpha^2}}
+\lambda(t,y)e^{\eta\alpha^*(t) e^{r(T-t)}}\biggl(\eta e^{r(T-t)}\bar{F}(\alpha^*(t),y)+\frac{\partial{\bar{F}(\alpha^*(t),y)}}{\partial{z}}\biggr)\\
&>-(1+\theta)\lambda(t,y)\frac{\partial{\bar{F}(\alpha^*(t),y)}}{\partial{z}}
+\lambda(t,y)e^{\eta\alpha^*(t) e^{r(T-t)}}\frac{\partial{\bar{F}(\alpha^*(t),y)}}{\partial{z}}\\
&=0.
\end{align*}
The proof is complete.
\end{proof}

\begin{Remark}
Formula~\eqref{eqn:optimal_reins_EVP} was found by \cite{zhaoetal:2013} (see equation 3.31, page 508). We point out that it is a completely deterministic strategy. This fact is crucially related to the use of the EVP rather than the underlying model; in fact, in \cite{zhaoetal:2013} the authors considered the Cram\'er-Lundberg model under the EVP%
\footnote{It is not surprising, in fact in~\cite{BC:IME2018} and references therein also the optimal proportional reinsurance under EVP turns out to be deterministic}.
\end{Remark}

From the economic point of view, by equation~\eqref{eqn:optimal_reins_EVP} it is easy to show that the optimal retention level is decreasing with respect to the interest rate and the risk-aversion; on the contrary, it is increasing with respect to the reinsurer's safety loading. In addition, the sensitivity with respect to the time-to-maturity depends on the sign of $r$.

Another relevant aspect of~\eqref{eqn:optimal_reins_EVP} is that it is independent of the claim size distribution. To the authors this result seems quite unrealistic. In fact, any subscriber of an excess-of-loss contract is strongly worried about possibly extreme events, hence the claims distribution is expected to play an important role.


\subsection{Variance premium principle}

This subsection is devoted to derive an optimal strategy under the variance premium principle (see Example~\ref{example:vp}).

 \begin{Proposition}
\label{prop:vp}
Let us suppose that $\Psi^{\alpha}(t,y)$ is strictly convex in $\alpha\in[0,+\infty)$ and
\begin{equation}
\label{eqn:vp_distribution}
\lim_{z\to+\infty}{e^{\eta\min{\{e^{rT},1\}}z}}\bar{F}(z,y)=l,
\end{equation}
for some $l>0$ (eventually $l=+\infty$).

Under the VP (see equation~\eqref{eqn:VP}) the optimal reinsurance strategy $\alpha^*(t,y)$ is the unique solution to the following equation:
\begin{equation}
\label{eqn:vp_equation}
\bigl(e^{\eta \alpha e^{r(T-t)}}+2\theta\alpha-1\bigr)\bar{F}(\alpha,y)=2\theta\int_\alpha^{+\infty}z\,dF(z,y).
\end{equation}
 \end{Proposition}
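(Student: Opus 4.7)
The plan is to apply Proposition~\ref{prop:optimal_XL} to the VP: strict convexity of $\Psi^\alpha(t,y)$ is part of the hypothesis, so I only need to verify that the set $A_0$ in~\eqref{eqn:set_A0} is empty and that the first-order condition~\eqref{eqn:first_order_cond} has a root in $(0,+\infty)$. Differentiating~\eqref{eqn:VP} under the integral sign, with $z-z\land\alpha=(z-\alpha)^+$, one obtains
\[
\frac{\partial q(t,y,\alpha)}{\partial\alpha}=-\lambda(t,y)\bar F(\alpha,y)-2\theta\lambda(t,y)\int_\alpha^{+\infty}(z-\alpha)\,dF(z,y).
\]
Evaluating at $\alpha=0$ gives $-\partial_\alpha q(t,y,0)=\lambda(t,y)\bigl(1+2\theta\int_0^{+\infty}z\,dF(z,y)\bigr)>\lambda(t,y)$, the strict inequality following from $\theta>0$ together with $\int_0^{+\infty}z\,dF(z,y)>0$ (since $F(0,y)=0$). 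Hence $A_0=\emptyset$, so by Proposition~\ref{prop:optimal_XL} the optimizer lies in the interior.

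Next, set $g(\alpha):=\partial_\alpha\Psi^\alpha(t,y)$; by~\eqref{derivata},
\[
g(\alpha)=\lambda(t,y)\Bigl[(e^{\eta\alpha e^{r(T-t)}}-1)\bar F(\alpha,y)-2\theta\int_\alpha^{+\infty}(z-\alpha)\,dF(z,y)\Bigr].
\]
Then $g(0)=-2\theta\lambda(t,y)\int_0^{+\infty}z\,dF(z,y)<0$. As $\alpha\to+\infty$, the mean integral $\int_\alpha^{+\infty}(z-\alpha)\,dF(z,y)\le\int_\alpha^{+\infty}z\,dF(z,y)$ vanishes thanks to the finite-mean consequence of~\eqref{eqn:exp_z_finite}, while $\bar F(\alpha,y)\to 0$. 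For the exponential term I use the uniform inequality $e^{r(T-t)}\ge\min\{e^{rT},1\}$, valid for every $t\in[0,T]$ whatever the sign of $r$, which combined with hypothesis~\eqref{eqn:vp_distribution} yields
\[
e^{\eta\alpha e^{r(T-t)}}\bar F(\alpha,y)\ge e^{\eta\min\{e^{rT},1\}\alpha}\bar F(\alpha,y)\longrightarrow l>0\quad(\alpha\to\infty).
\]
Hence $\liminf_{\alpha\to\infty}g(\alpha)\ge\lambda(t,y)\,l>0$. Continuity of $g$ and the intermediate value theorem produce a root $\hat\alpha(t,y)\in(0,+\infty)$; strict convexity of $\Psi^\alpha$ forces $g$ to be strictly increasing, so this root is unique, and Proposition~\ref{prop:optimal_XL} identifies it with $\alpha^*(t,y)$.

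To reach the stated form~\eqref{eqn:vp_equation}, I divide $g(\alpha^*)=0$ by $\lambda(t,y)>0$ and apply the identity $\int_\alpha^{+\infty}(z-\alpha)\,dF(z,y)=\int_\alpha^{+\infty}z\,dF(z,y)-\alpha\bar F(\alpha,y)$; straightforward regrouping yields $(e^{\eta\alpha e^{r(T-t)}}-1+2\theta\alpha)\bar F(\alpha,y)=2\theta\int_\alpha^{+\infty}z\,dF(z,y)$, which is~\eqref{eqn:vp_equation}. The main obstacle, in contrast with the EVP case of Proposition~\ref{prop:evp} where closed-form inversion works, is the asymptotic analysis of $g$: no explicit solution is available, and the positivity of the limit of $g$ has to be extracted from the tail condition~\eqref{eqn:vp_distribution}. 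The delicate configuration is when $l$ is finite and $e^{r(T-t)}=\min\{e^{rT},1\}$ (attained only at specific boundary values of $t$ and $r$), and is precisely where the uniform bound $e^{r(T-t)}\ge\min\{e^{rT},1\}$ is essential.
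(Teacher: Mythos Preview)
Your proof is correct and follows essentially the same route as the paper: apply Proposition~\ref{prop:optimal_XL}, compute $\partial_\alpha q$ from~\eqref{eqn:VP}, check $A_0=\emptyset$, and use a sign-change argument on $\partial_\alpha\Psi^\alpha$ together with strict convexity to get existence and uniqueness of the interior root. The only cosmetic difference is that the paper writes $\partial_\alpha q$ directly in the regrouped form $\lambda(t,y)\bar F(\alpha,y)(2\theta\alpha-1)-2\theta\lambda(t,y)\int_\alpha^{+\infty}z\,dF(z,y)$, so~\eqref{eqn:vp_equation} appears immediately, whereas you keep the integrand as $(z-\alpha)$ and regroup at the end; your form is in fact slightly cleaner for the asymptotic step, and you spell out explicitly the role of the bound $e^{r(T-t)}\ge\min\{e^{rT},1\}$ that the paper leaves implicit when invoking~\eqref{eqn:vp_distribution}.
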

\begin{proof}
The proof is based on Proposition~\eqref{prop:optimal_XL}. By equation~\eqref{eqn:VP} we get its derivative:
\[
\frac{\partial{q(t,y,\alpha)}}{\partial{\alpha}}=\lambda(t,y)\bar{F}(\alpha,y)(2\theta\alpha-1)-2\theta\lambda(t,y)\int_\alpha^{+\infty}z\,dF(z,y).
\]
It is clear that the set $A_0$ defined in~\eqref{eqn:set_A0} is empty, because for any $(t,y)\in[0,T]\times\mathbb{R}$
\[
-\frac{\partial{q(t,y,0)}}{\partial{\alpha}}=\lambda(t,y)\bar{F}(0,y)+2\theta\lambda(t,y)\int_0^{+\infty}z\,dF(z,y)>\lambda(t,y).
\]
Hence the minimizer should coincide with the unique stationary point of $\Psi^{\alpha}(t,y)$, i.e. the solution to~\eqref{eqn:vp_equation}. In order to prove it, we need to ensure the existence of a solution to~\eqref{eqn:vp_equation}. For this purpose, we notice that on the one hand
\[
\frac{\partial{\Psi^{0}(t,y)}}{\partial{\alpha}}=-2\theta\lambda(t,y)\int_0^{+\infty}z\,dF(z,y)<0.
\]
On the other hand, for $\alpha\to+\infty$, by~\eqref{eqn:vp_distribution} we get
\begin{align*}
\lim_{\alpha\to+\infty}{\frac{\partial{\Psi^{\alpha}(t,y)}}{\partial{\alpha}}}
&=\lambda(t,y)\lim_{\alpha\to+\infty}{\biggl[\bigl(e^{\eta \alpha e^{r(T-t)}}+2\theta\alpha-1\bigr)\bar{F}(\alpha,y)-2\theta\int_\alpha^{+\infty}z\,dF(z,y)\biggr]}>0.
\end{align*}
As a consequence, by the continuity of $\Psi^{\alpha}(t,y)$ there exists a point $\alpha^*\in(0,+\infty)$ such that $\frac{\partial{\Psi^{\alpha^*}(t,y)}}{\partial{\alpha}}=0$. Such a solution is unique because $\Psi^{\alpha}(t,y)$ is strictly convex by hypothesis.
\end{proof}

Conversely to Proposition~\ref{prop:evp},  the optimal retention level given in Proposition~\ref{prop:vp} is still dependent on the stochastic factor $Y$. Such a dependence is spread through the claim size distribution.

\begin{Remark}
We observe that any heavy-tailed distribution (see the proof of Proposition~\ref{prop:existence_suff}) satisfies the condition~\eqref{eqn:vp_distribution} with $l=+\infty$.
\end{Remark}

Now we specialize the variance premium principle to conditionally exponentially distributed claims.

 \begin{Proposition}
\label{prop:vp_exp}
Under the VP, suppose that $F(z,y)=(1-e^{-\zeta(y) z})\mathbbm{1}_{\{z>0\}}$ for some function $\zeta(y)$ such that $\zeta(y)>0$ $\forall y\in\mathbb{R}$. The optimal reinsurance strategy is given by
\begin{equation}
\label{eqn:optimal_reins_VP}
\alpha^*(t,y)=e^{-r(T-t)}\frac{\log{(1+\frac{2\theta}{\zeta(y)})}}{\eta}, \qquad (t,y)\in[0,T]\times\mathbb{R}.
\end{equation}
 \end{Proposition}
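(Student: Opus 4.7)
The plan is to compute the derivative $\partial\Psi^{\alpha}(t,y)/\partial\alpha$ directly in the exponential case and exploit the algebraic cancellation this produces. I prefer this route to a direct appeal to Proposition~\ref{prop:vp}, because the latter's hypothesis~\eqref{eqn:vp_distribution} effectively requires $\zeta(y)\le\eta\min\{e^{rT},1\}$, whereas the present statement only assumes $\zeta(y)>0$.

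First I would insert the exponential specification into the VP premium. With $\bar F(z,y)=e^{-\zeta(y)z}$ and $dF(z,y)=\zeta(y)e^{-\zeta(y)z}\,dz$, integration by parts yields
\[
\int_\alpha^{+\infty} z\,dF(z,y) = \Bigl(\alpha+\tfrac{1}{\zeta(y)}\Bigr)e^{-\zeta(y)\alpha}.
\]
Substituting into the formula for $\partial q/\partial\alpha$ derived inside the proof of Proposition~\ref{prop:vp},
\[
\frac{\partial q(t,y,\alpha)}{\partial\alpha} = \lambda(t,y)e^{-\zeta(y)\alpha}(2\theta\alpha-1)-2\theta\lambda(t,y)\Bigl(\alpha+\tfrac{1}{\zeta(y)}\Bigr)e^{-\zeta(y)\alpha} = -\lambda(t,y)\Bigl(1+\tfrac{2\theta}{\zeta(y)}\Bigr)e^{-\zeta(y)\alpha},
\]
since the $2\theta\alpha$ contributions cancel. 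Plugging this into the general expression~\eqref{derivata}, the two $\alpha$-dependent pieces combine into
\[
\frac{\partial\Psi^{\alpha}(t,y)}{\partial\alpha} = \lambda(t,y)\, e^{-\zeta(y)\alpha}\,\Bigl[\,e^{\eta\alpha e^{r(T-t)}}-\Bigl(1+\tfrac{2\theta}{\zeta(y)}\Bigr)\Bigr].
\]

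From this factorization the rest is immediate. The leading factor $\lambda(t,y)e^{-\zeta(y)\alpha}$ is strictly positive; the bracket is strictly increasing in $\alpha$, equals $-2\theta/\zeta(y)<0$ at $\alpha=0$, and tends to $+\infty$ as $\alpha\to+\infty$. Hence $\partial\Psi^{\alpha}/\partial\alpha$ has a unique zero $\alpha^*(t,y)\in(0,+\infty)$, determined by $e^{\eta\alpha^* e^{r(T-t)}}=1+2\theta/\zeta(y)$, and solving for $\alpha^*(t,y)$ gives exactly~\eqref{eqn:optimal_reins_VP}. The same sign analysis shows that $\Psi^{\alpha}(t,y)$ is strictly decreasing on $[0,\alpha^*(t,y)]$ and strictly increasing on $[\alpha^*(t,y),+\infty)$, so $\alpha^*(t,y)$ is the unique global minimizer on $[0,+\infty)$, as required.

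I do not anticipate any substantial obstacle: the entire argument reduces to the neat cancellation between the $(2\theta\alpha-1)$ term coming from $\partial q/\partial\alpha$ and the $2\theta\alpha$ contribution generated by $\int_\alpha^{+\infty}z\,dF(z,y)$, which is precisely what delivers the compact factored form of the derivative. A pleasant byproduct is that this specialization does not need the strict convexity hypothesis of Proposition~\ref{prop:vp}, since unimodality of $\Psi^{\alpha}$ is read off directly from the sign of the bracketed factor.
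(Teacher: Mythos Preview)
Your proof is correct and follows essentially the same route as the paper: both compute $\partial\Psi^{\alpha}/\partial\alpha$ in the exponential case, obtain the identical factored form $\lambda(t,y)e^{-\zeta(y)\alpha}\bigl[e^{\eta\alpha e^{r(T-t)}}-1-2\theta/\zeta(y)\bigr]$, and read off the unique zero. The only cosmetic difference is that the paper verifies minimality by checking $\partial^2\Psi^{\alpha^*}/\partial\alpha^2>0$ and invoking Proposition~\ref{eqn:optimal_solution2}, whereas you argue directly from the monotone sign change of the bracket; your variant is slightly cleaner and, as you note, sidesteps any need for the extra hypotheses of Proposition~\ref{prop:vp}.
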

\begin{proof}
By the proof of Proposition~\ref{prop:vp}, we know that under VP $A_0=\emptyset$. Now, under our hypotheses, by equation \eqref{derivata} we readily get
\begin{align*}
\frac{\partial{\Psi^{\alpha}(t,y)}}{\partial{\alpha}}&=\lambda(t,y)\biggl[\bigl(e^{\eta \alpha e^{r(T-t)}}+2\theta\alpha-1\bigr)\bar{F}(\alpha,y)-2\theta\int_\alpha^{+\infty}z\,dF(z,y)\biggr]\\
&=\lambda(t,y)\biggl[\bigl(e^{\eta \alpha e^{r(T-t)}}+2\theta\alpha-1\bigr)e^{-\zeta(y) \alpha}-2\theta e^{-\zeta(y) \alpha}\bigl(\alpha+\frac{1}{\zeta(y)}\bigr)\biggr]\\
&=\lambda(t,y)e^{-\zeta(y) \alpha}\biggl[e^{\eta \alpha e^{r(T-t)}}-1-\frac{2\theta}{\zeta(y)}\biggr].
\end{align*}
The equation $\frac{\partial{\Psi^{\alpha}(t,y)}}{\partial{\alpha}}=0$ admits a unique solution, given by equation~\eqref{eqn:optimal_reins_VP}. At this point $\alpha^*(t,y)$, the function $\Psi^{\alpha}(t,y)$ is strictly convex, because
\begin{align*}
\frac{\partial{\Psi^{\alpha^*}(t,y)}}{\partial{\alpha}}&=-\zeta(y)\frac{\partial{\Psi^{\alpha^*}(t,y)}}{\partial{\alpha}}+\lambda(t,y)e^{-\zeta(y) \alpha^*}\eta e^{r(T-t)}e^{\eta \alpha^* e^{r(T-t)}}\\
&=\lambda(t,y)e^{-\zeta(y) \alpha^*}\eta e^{r(T-t)}e^{\eta \alpha^* e^{r(T-t)}}>0.
\end{align*}
It follows that $\alpha^*(t,y)$ is the unique minimizer by Proposition~\ref{eqn:optimal_solution}.
\end{proof}
Contrary to the equation~\eqref{eqn:optimal_reins_EVP}, the explicit formula~\eqref{eqn:optimal_reins_VP} keeps the dependence on the stochastic factor $Y$. In addition, the following result holds true.

\begin{Remark}
Suppose that $F(z,y)=(1-e^{-\zeta(y) z})\mathbbm{1}_{\{z>0\}}$ for some function $\zeta(y)$ such that $\zeta(y)>0$ $\forall y\in\mathbb{R}$. We consider two different reinsurance safety loadings $\theta_{\text{EVP}},\theta_{\text{VP}}>0$, referring to the EVP and VP, respectively. Moreover, let us denote by $\alpha_{\text{EVP}}^*(t)$ and $\alpha_{\text{VP}}^*(t,y)$ the optimal retention level under the EVP and VP, given in equations~\eqref{eqn:optimal_reins_EVP} and~\eqref{eqn:optimal_reins_VP}, respectively. It is easy to show that $\forall t\in[0,T]$
\begin{equation*}
\alpha_{\text{VP}}^*(t,y)
\begin{cases}
	>\alpha_{\text{EVP}}^*(t) & \forall y:\zeta(y)<\frac{2\theta_{\text{VP}}}{\theta_{\text{EVP}}}
	\\
	\le\alpha_{\text{EVP}}^*(t) & \text{otherwise.}
\end{cases}
\end{equation*}
From the practical point of view, as long as the stochastic factor fluctuations result in a rate parameter $\zeta(y)$ higher than the threshold $\frac{2\theta_{\text{VP}}}{\theta_{\text{EVP}}}$, the optimal retention level evaluated through the expected value principle turns out to be larger than the variance principle.
\end{Remark}


\section{Verification Theorem}
\label{section:verification}

\begin{Theorem}[Verification Theorem]
\label{theorem:verification}
Let us suppose that $\varphi:[0,T]\times\mathbb{R}\to(0,+\infty)$ is a bounded classical solution $\varphi\in\mathcal{C}^{1,2}((0,T)\times\mathbb{R})\cap\mathcal{C}([0,T]\times\mathbb{R})$ to the Cauchy problem~\eqref{eqn:hjb_prop}, such that
\begin{equation}
\label{eqn:phi_sublinear}
\abs*{\frac{\partial{\varphi}}{\partial{y}}(t,y)}\le C(1+\abs{y}^\beta) \qquad\forall(t,y)\in[0,T]\times\mathbb{R},
\end{equation}
for some constants $\beta,C>0$. Then the function $v(t,x,y) = e^{-\eta xe^{r(T-t)}}\varphi(t,y)$ (see equation~\eqref{eqn:v_from_phi}) is the value function in equation~\eqref{eqn:value_fun}. As a byproduct, the strategy $\alpha^*_t\doteq\alpha^*(t,Y_t)$ described in Proposition~\ref{prop:optimal_XL} is an optimal control.
\end{Theorem}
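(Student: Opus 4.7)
The plan is to run the standard verification argument for HJB problems with exponential utility, combining Itô's formula, the HJB inequality, and a localization argument to handle integrability, and then to close the loop by using the pointwise minimizer $\alpha^*(t,y)$ from Proposition~\ref{prop:optimal_XL} to reverse the inequality.

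First, I would fix $(t,x,y)\in[0,T]\times\mathbb{R}^2$ and an arbitrary $\alpha\in\mathcal{A}_t$. Since $\varphi\in\mathcal{C}^{1,2}((0,T)\times\mathbb{R})$ and $v(s,x',y')=e^{-\eta x' e^{r(T-s)}}\varphi(s,y')$ is $\mathcal{C}^{1,2}$ on $(0,T)\times\mathbb{R}^2$, I would apply Itô's formula to $v(s,X^\alpha_s,Y_s)$ on $[t,T]$ to get
\[
v(T,X^\alpha_T,Y_T)=v(t,x,y)+\int_t^T\mathcal{L}^{\alpha_s}v(s,X^\alpha_s,Y_s)\,ds+\mathcal{M}^\alpha_T-\mathcal{M}^\alpha_t,
\]
where $\mathcal{M}^\alpha$ collects the $dW^{(Y)}$ and the compensated $m(ds,dz)-\nu(ds,dz)$ stochastic integrals, exactly as in the computation in Lemma~\ref{lemmagenerator}. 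Using the terminal condition $\varphi(T,\cdot)=1$ gives $v(T,X^\alpha_T,Y_T)=e^{-\eta X^\alpha_T}$, and the HJB inequality in \eqref{eqn:HJBformulation_prop} gives $\mathcal{L}^{\alpha_s}v\ge 0$ pointwise. If $\mathcal{M}^\alpha$ were a true martingale on $[t,T]$, taking conditional expectations given $\{X^\alpha_t=x, Y_t=y\}$ would immediately yield $v(t,x,y)\le\mathbb{E}[e^{-\eta X^\alpha_T}\mid X^\alpha_t=x,Y_t=y]$, and the infimum over $\mathcal{A}_t$ would produce $v\le$ the value function.

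The delicate step is to justify that $\mathcal{M}^\alpha$ is a true martingale. I would localize via the sequence of stopping times $\tau_n\doteq\inf\{s\ge t:\abs{Y_s}+\abs{X^\alpha_s}>n\}\wedge T$, so that the stopped processes are martingales. For the Brownian part, the integrand is $e^{-\eta X^\alpha_s e^{r(T-s)}}\gamma(s,Y_s)\frac{\partial\varphi}{\partial y}(s,Y_s)$, and the polynomial growth bound \eqref{eqn:phi_sublinear} together with \eqref{eqn:solutionY}--\eqref{eqn:solutionY2} controls it. For the jump part, the integrand $e^{-\eta X^\alpha_s e^{r(T-s)}}\varphi(s,Y_s)\bigl[e^{\eta(z\wedge\alpha_s)e^{r(T-s)}}-1\bigr]$ is dominated, after using boundedness of $\varphi$ and $X^\alpha_s\ge-\int_t^s e^{r(s-u)}q(u,Y_u,0)\,du - \int_0^s\int z\, m$, by a quantity controlled through assumption \eqref{eqn:q_exp} on the reinsurance premium, which in turn guarantees $\mathbb{E}[e^{-\eta X^\alpha_T}]<\infty$ for every admissible $\alpha$ (as noted in the Remark after Definition~\ref{def:admissible_strategies}). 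I would then pass to the limit $n\to\infty$ via dominated convergence using $e^{-\eta X^\alpha_T}\in L^1$ and the uniform integrability one extracts from \eqref{eqn:q_exp} to recover the inequality for the unstopped process.

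To obtain equality, I would replicate the computation with $\alpha^*_s\doteq\alpha^*(s,Y_s)$. By Proposition~\ref{prop:optimal_XL} this $\alpha^*$ realizes the pointwise infimum in the HJB equation, so $\mathcal{L}^{\alpha^*_s}v(s,X^{\alpha^*}_s,Y_s)=0$ and the inequality becomes an equality. Admissibility of $\alpha^*$ follows from its measurability (as a composition of the measurable map $\alpha^*(\cdot,\cdot)$ with the predictable process $Y_{\cdot^-}$) and non-negativity; existence and uniqueness of the corresponding $X^{\alpha^*}$ are then routine given \eqref{eqn:projection_integrable}. Combining the two directions yields $v(t,x,y)=\mathbb{E}[e^{-\eta X^{\alpha^*}_T}\mid X^{\alpha^*}_t=x,Y_t=y]=\inf_{\alpha\in\mathcal{A}_t}\mathbb{E}[e^{-\eta X^\alpha_T}\mid\cdot]$, proving both the identification of the value function and the optimality of $\alpha^*$. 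The main obstacle I anticipate is precisely the martingale/integrability step for the compensated jump integral, since the integrand carries an $e^{\eta(z\wedge\alpha_s)e^{r(T-s)}}$ factor that must be controlled uniformly in admissible strategies using \eqref{eqn:q_exp} rather than a direct moment bound on the jump measure.
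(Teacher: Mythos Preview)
Your overall architecture is exactly that of the paper: apply It\^o's formula to $v(s,X^\alpha_s,Y_s)$, use that $\varphi$ solves \eqref{eqn:hjb_prop} so that $\mathcal{L}^{\alpha_s} v\ge0$, localize with stopping times $\tau_n$ to make the stochastic integrals true martingales, pass to the limit, and then rerun the computation with $\alpha^*$ to convert the inequality into an equality. So there is no methodological difference.

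The one concrete gap is in how you control the compensated jump integral. Once you localize so that $\lvert X^\alpha_s\rvert\le n$ and $\lvert Y_s\rvert\le n$ on $[t,\tau_n]$, the factors $e^{-\eta X^\alpha_s e^{r(T-s)}}$ and $\varphi(s,Y_s)$ are already bounded by constants depending on $n$; your lower bound on $X^\alpha_s$ in terms of $\int q\,du$ and $\int\!\!\int z\,m$ is then redundant (and not useful anyway, since it still contains the random jump term). What remains to bound is
\[
\mathbb{E}\biggl[\int_t^{T\wedge\tau_n}\int_0^{+\infty}\bigl(e^{\eta(z\wedge\alpha_s)e^{r(T-s)}}-1\bigr)\,\lambda(s,Y_s)\,dF(z,Y_s)\,ds\biggr],
\]
and since $z\wedge\alpha_s\le z$, this is an exponential moment of the claim size against $\nu$. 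That is precisely assumption \eqref{eqn:exp_z_finite}, and it is what the paper invokes at this step. Assumption \eqref{eqn:q_exp} is about the exponential moment of the integrated \emph{premium} $\int_0^T e^{r(T-s)}q(s,Y_s,0)\,ds$; it says nothing about moments of the jump measure and cannot substitute for \eqref{eqn:exp_z_finite} here. Condition \eqref{eqn:q_exp} does enter, but only at the later stage you mention, to ensure $e^{-\eta X^\alpha_T}\in L^1$ so that one can pass to the limit along $\tau_n$. Swap the assumption at the jump step and your argument matches the paper's.
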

\begin{proof}
By Proposition~\ref{prop:phi_to_v}, the function $v(t,x,y)$ defined in equation~\eqref{eqn:v_from_phi} solves the HJB problem~\eqref{eqn:HJBformulation_prop}. Hence for any $(t,x,y)\in[0,T]\times\mathbb{R}^2$
\[
\mathcal{L}^\alpha v(s,X^\alpha_{t,x}(s),Y_{t,y}(s)) \ge 0 \qquad \forall s\in[t,T],\alpha\in\mathcal{A}_t,
\]
where $\{X^\alpha_{t,x}(s)\}_{s\in [t,T]}$ and $\{Y_{t,y}(s)\}_{s\in [t,T]}$ denote the solutions to~\eqref{eqn:wealth_SDE} and~\eqref{eqn:stochasticfactor} at time $s\in[t,T]$, starting from $(t,x)\in[0,T]\times\mathbb{R}$ and $(t,y)\in[0,T]\times\mathbb{R}$, respectively.\\
From It\^o's formula we get
\begin{equation}
\label{ito_verification}
v(T,X^\alpha_{t,x}(T),Y_{t,y}(T))=v(t,x,y)+\int_t^T\mathcal{L}^\alpha v(s,X^\alpha_{t,x}(s),Y_s)\,ds + M_T,
\end{equation}
with $\{M_r\}_{r\in [t,T]}$ defined by
\begin{multline}
\label{eqn:M_mg}
M_r=\int_t^r\gamma(s,Y_s)\frac{\partial{v}}{\partial{y}}(s,X^\alpha_{t,x}(s),Y_s)\,dW^{(Y)}_s\\
+\int_t^r\int_0^{+\infty}{\biggl(v(s,X^\alpha_{t,x}(s)-z\land\alpha,Y_s)-v(s,X^\alpha_{t,x}(s),Y_s)\biggr)\bigl(m(ds,dz)-\nu(ds,dz)\bigr)}.
\end{multline}
In order to show that $\{M_r\}_{r\in [t,T]}$ is an $\{\mathcal{F}_t\}_{t\in [0,T]}$-local-martingale, we use a localization argument, taking
\[
\tau_n\doteq\inf\{s\in[t,T]\mid X^\alpha_{t,x}(s)<-n \lor \abs{Y_s}>n\}, \qquad n\in\mathbb{N}.
\]
The reader can easily check that $\{\tau_n\}_{n\in\mathbb{N}}$ is a non decreasing sequence of stopping time such that $\lim_{n\to+\infty}{\tau_n}=+\infty$. For the diffusion term of $M_r$, using the assumptions~\eqref{eqn:phi_sublinear} and~\eqref{eqn:solutionY}, we notice that
\begin{align*}
&\mathbb{E}\biggl[\int_t^{T\land\tau_n}\gamma(s,Y_s)^2\biggl(\frac{\partial{v}}{\partial{y}}(s,X^\alpha_{t,x}(s),Y_s)\biggr)^2\,ds\biggr]\\
&=\mathbb{E}\biggl[\int_t^{T\land\tau_n}\gamma(s,Y_s)^2e^{-2\eta X^\alpha_{t,x}(s)e^{r(T-t)}}\biggl(\frac{\partial{\varphi}}{\partial{y}}(s,Y_s)\biggr)^2\,ds\biggr]\\
&\le C_n\,\mathbb{E}\biggl[\int_t^{T\land\tau_n}\gamma(s,Y_s)^2\,ds\biggr]<\infty \qquad \forall n\in\mathbb{N},
\end{align*}
where $C_n>0$ is a constant depending on $n$. For the jump term, by the condition \eqref{eqn:exp_z_finite} and Remark \ref{remark:random_measure}, we get
\begin{align*}
&\mathbb{E}\biggl[\int_t^{T\land\tau_n}\int_0^{+\infty}{\abs{v(s,X^\alpha_{t,x}(s)-z\land\alpha,Y_s)-v(s,X^\alpha_{t,x}(s),Y_s)} \nu(ds,dz)}\biggr]\\
&=\mathbb{E}\biggl[\int_t^{T\land\tau_n}\int_0^{+\infty}{\abs*{e^{-\eta X^\alpha_{t,x}(s)}(e^{z\land \alpha}-1)\varphi(s,Y_s)}\nu(ds,dz)}\biggr]\\
&\le \tilde{C}_n\,\mathbb{E}\biggl[\int_t^{T\land\tau_n}\int_0^{+\infty}{e^z\nu(ds,dz) }\biggr]<\infty,
\end{align*}
with $\tilde{C}_n$ denoting a positive constant dependent on $n$. Thus $\{M_r\}_{r\in [t,T]}$ turns out to be an $\{\mathcal{F}_t\}_{t\in [0,T]}$-local-martingale and $\{\tau_n\}_{n\in\mathbb{N}}$ is a localizing sequence for it. Now, taking the expectation of~\eqref{ito_verification} with $T\land\tau_n$ in place of $T$, we obtain that
\[
\mathbb{E}[v(T\land\tau_n,X^\alpha_{t,x}(T\land\tau_n),Y_{t,y}(T\land\tau_n))\mid\mathcal{F}_t]\ge v(t,x,y) \quad\forall(t,x,y)\in[0,\land\tau_n]\times\mathbb{R}^2,\alpha\in\mathcal{A}_t,n\in\mathbb{N}.
\]
Let us notice that
\[
\mathbb{E}[v(T\land\tau_n,X^\alpha_{t,x}(T\land\tau_n),Y_{t,y}(T\land\tau_n))^2]\le \tilde{C}e^{-2\eta ne^{r(T-t)}}\le \tilde{C},
\]
where $\tilde{C}>0$ is a constant. As a consequence, $\{v(T\land\tau_n,X^\alpha_{t,x}(T\land\tau_n),Y_{t,y}(T\land\tau_n))\}_{n\in\mathbb{N}}$ is a sequence of uniformly integrable random variables. By classical results in probability theory, it converges almost surely. Using the monotonicity and the boundedness of $\{\tau_n\}_{n\in\mathbb{N}}$, together with the non explosion of $\{X^\alpha_{t,x}(s)\}_{s\in [t,T]}$ and $\{Y_{t,y}(s)\}_{s\in [t,T]}$ (see~\eqref{eqn:wealth_sol} and~\eqref{eqn:solutionY2}), taking the limit for $n\to+\infty$ we conclude that
\begin{align*}
\mathbb{E}[v(T,X^\alpha_{t,x}(T),Y_{t,y}(T))\mid\mathcal{F}_t]
&=\lim_{n\to+\infty}{\mathbb{E}[v(T\land\tau_n,X^\alpha_{t,x}(T\land\tau_n),Y_{t,y}(T\land\tau_n))\mid\mathcal{F}_t]}\\
&\ge v(t,x,y) \qquad \forall t\in[0,T],\alpha\in\mathcal{A}_t.
\end{align*}
As a byproduct, since $\alpha^*(t,y)$ given in Proposition~\ref{prop:optimal_XL} realizes the infimum in~\eqref{eqn:optimal_reins_pb}, we have that $\mathcal{L}^{\alpha^*}v(t,x,y)=0$ and, replicating the calculations above, we obtain the equality
\[
\inf_{\alpha\in\mathcal{A}_t}{\mathbb{E}\biggl[e^{-\eta X^\alpha_{t,x}(T)}\mid Y_t=y \biggr]}=v(t,x,y),
\]
i.e. $\alpha^*_t\doteq\alpha^*(t,Y_t)$ is an optimal control.
\end{proof}

By Theorem~\ref{theorem:verification}, the value function~\eqref{eqn:value_fun} can be characterized as a transformation of the solution to the partial differential equation (PDE)~\eqref{eqn:hjb_prop}. Nevertheless, an explicit expression is not available, except for very special cases. The following result provides a probabilistic representation by means of the Feynman-Kac theorem.

 \begin{Proposition}
\label{prop:fk}
Suppose that $\varphi:[0,T]\times\mathbb{R}\to(0,+\infty)$ is a bounded classical solution $\varphi\in\mathcal{C}^{1,2}((0,T)\times\mathbb{R})\cap\mathcal{C}([0,T]\times\mathbb{R})$ to the Cauchy problem~\eqref{eqn:hjb_prop}, such that the condition~\eqref{eqn:phi_sublinear} is fulfilled. Then the value function~\eqref{eqn:value_fun} admits the following representation:
\begin{equation}
\label{eqn:feynmankac}
v(t,x,y) = e^{-\eta xe^{r(T-t)}}\,\mathbb{E}\biggl[e^{\int_t^T{\eta e^{R(T-s)}\bigl(\inf_{\alpha\in[0,+\infty)}{\Psi^{\alpha}(s,Y_s)-c(s,Y_s)}\bigr) \,ds}}\mid Y_t=y \biggr],
\end{equation}
where $\Psi^{\alpha}(t,y)$ is the function defined in~\eqref{eqn:Psi}.
 \end{Proposition}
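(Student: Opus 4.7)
By the Verification Theorem~\ref{theorem:verification}, we already have $v(t,x,y)=e^{-\eta x e^{r(T-t)}}\varphi(t,y)$, so the task reduces to establishing the stochastic representation
\[
\varphi(t,y)=\mathbb{E}\Bigl[e^{\int_t^T h(s,Y_s)\,ds}\,\Big|\,Y_t=y\Bigr],\qquad h(s,y)\doteq \eta e^{r(T-s)}\bigl[\inf_{\alpha\in[0,+\infty)}\Psi^{\alpha}(s,y)-c(s,y)\bigr],
\]
which is exactly the Feynman--Kac representation associated with the Cauchy problem~\eqref{eqn:hjb_prop}, since $h$ plays the role of the zero-order (potential) coefficient and $\varphi(T,\cdot)=1$.

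The plan is to apply It\^o's formula to the process $N_s\doteq\varphi(s,Y_s)\,e^{\int_t^s h(u,Y_u)\,du}$ for $s\in[t,T]$. Using the SDE~\eqref{eqn:stochasticfactor} for $Y$ together with the PDE in~\eqref{eqn:hjb_prop}, the finite-variation part of $dN_s$ simplifies to
\[
\Bigl[\tfrac{\partial\varphi}{\partial t}+b\tfrac{\partial\varphi}{\partial y}+\tfrac12\gamma^{2}\tfrac{\partial^{2}\varphi}{\partial y^{2}}+h\,\varphi\Bigr](s,Y_s)\,e^{\int_t^s h(u,Y_u)\,du}\,ds=0,
\]
so $N$ is a local martingale with stochastic differential $dN_s=\gamma(s,Y_s)\tfrac{\partial\varphi}{\partial y}(s,Y_s)\,e^{\int_t^s h(u,Y_u)\,du}\,dW^{(Y)}_s$. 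Introduce the localizing sequence $\tau_n\doteq\inf\{s\in[t,T]:|Y_s|>n\}\wedge T$, which tends to $T$ a.s.\ thanks to~\eqref{eqn:solutionY2}. On $[t,\tau_n]$ the coefficient $\gamma(s,Y_s)\tfrac{\partial\varphi}{\partial y}(s,Y_s)$ is bounded (because $\tfrac{\partial\varphi}{\partial y}$ satisfies~\eqref{eqn:phi_sublinear} and $|Y|\le n$), while the exponential factor is likewise bounded on this stochastic interval; hence $N^{\tau_n}$ is a genuine martingale, and taking conditional expectations yields
\[
\varphi(t,y)=\mathbb{E}\bigl[\varphi(\tau_n,Y_{\tau_n})\,e^{\int_t^{\tau_n} h(u,Y_u)\,du}\,\big|\,Y_t=y\bigr]\qquad\forall n\in\mathbb{N}.
\]

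The main obstacle is the passage to the limit $n\to+\infty$, for which I would invoke dominated convergence. Since $\varphi$ is bounded by some constant $K$, it suffices to bound the exponential factor uniformly in $n$ by an integrable random variable. To this end, observe that $\Psi^{0}(s,y)=q(s,y,0)$ by~\eqref{eqn:Psi}, and since $c\ge 0$ one has
\[
h(s,Y_s)\le \eta e^{r(T-s)}\inf_{\alpha}\Psi^{\alpha}(s,Y_s)\le \eta e^{r(T-s)}q(s,Y_s,0),
\]
so that $\bigl|\varphi(\tau_n,Y_{\tau_n})e^{\int_t^{\tau_n}h(u,Y_u)\,du}\bigr|\le K\,e^{\eta\int_t^T e^{r(T-u)}q(u,Y_u,0)\,du}$, which is integrable by hypothesis~\eqref{eqn:q_exp}. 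The continuity of $Y$ and of the integrand then gives pointwise convergence to $\varphi(T,Y_T)\,e^{\int_t^T h(u,Y_u)\,du}=e^{\int_t^T h(u,Y_u)\,du}$, so dominated convergence yields
\[
\varphi(t,y)=\mathbb{E}\Bigl[e^{\int_t^T h(u,Y_u)\,du}\,\Big|\,Y_t=y\Bigr],
\]
and multiplying by $e^{-\eta x e^{r(T-t)}}$ produces~\eqref{eqn:feynmankac}. The only delicate point is the uniform upper bound just described; every other step is standard for Feynman--Kac.
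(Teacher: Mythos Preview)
Your proof is correct and follows exactly the approach of the paper, which simply invokes Theorem~\ref{theorem:verification} together with the Feynman--Kac representation of $\varphi$; you have merely spelled out the latter in detail (It\^o's formula, localization via $\tau_n$, and dominated convergence using~\eqref{eqn:q_exp}). One small imprecision: on $[t,\tau_n]$ you claim the diffusion integrand is \emph{bounded}, but only $\partial_y\varphi$ and the exponential factor are controlled by $|Y|\le n$---for $\gamma$ you should instead appeal to~\eqref{eqn:solutionY} to get square-integrability, which is all that is needed for the stopped stochastic integral to be a true martingale.
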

\begin{proof}
The thesis immediately follows by Theorem~\ref{theorem:verification} and the Feynman-Kac representation of $\varphi(t,y)$.
\end{proof}

\begin{Remark}
We refer to~\cite{heath:pde} for existence and uniqueness of a solution to the PDE \eqref{eqn:hjb_prop}.
\end{Remark}


\section{Numerical results}
\label{section:sim}

In this section we show some numerical results, mostly based on Propositions \ref{prop:evp} and \ref{prop:vp_exp}. We assumed the following dynamic for the stochastic factor $Y$ for performing simulations:
\[
dY_t=0.3\,dt + 0.3\,dW^{(Y)}_t , \qquad Y_0=1.
\]
The $\{\mathcal{F}_t\}_{t\in[0,T]}$-dual predictable projection $\nu(dt,dz)$ (see equation \eqref{eqn:dual_projection}) is determined by these functions:
\begin{gather*}
\lambda(t,y)=\lambda_0e^{\frac{1}{2}y}, \qquad \lambda_0=0.1,\\
F(z,y)=(1-e^{-\zeta(y) z})\mathbbm{1}_{\{z>0\}}, \quad\text{with } \zeta(y)=e^y+1.
\end{gather*}

The parameters are set according to Table~\ref{tab:parameters} below.
\begin{table}[H]
\caption{Simulation parameters}
\label{tab:parameters}
\centering
\begin{tabular}{ll}
\toprule
\textbf{Parameter} & \textbf{Value}\\
\midrule
$c$ & $1$\\
$T$ & $5$ Y\\
$\eta$ & $0.5$\\
$\theta$ & $0.1$\\
$r$ & $5\%$\\
$N$ & $500$\\
$M$ & $5000$\\
\bottomrule
\end{tabular}
\end{table}

The SDEs are approximated through a classical Euler's scheme with steps length $\frac{T}{N}$, while the expectations are evaluated by means of Monte Carlo simulations with parameter $M$.

In Figure~\ref{img:dynamic} we show the dynamic strategies under EVP and VP, computed by the equations \eqref{eqn:optimal_reins_EVP} and \eqref{eqn:optimal_reins_VP}, respectively.

\begin{figure}[H]
\centering
\includegraphics[scale=0.3]{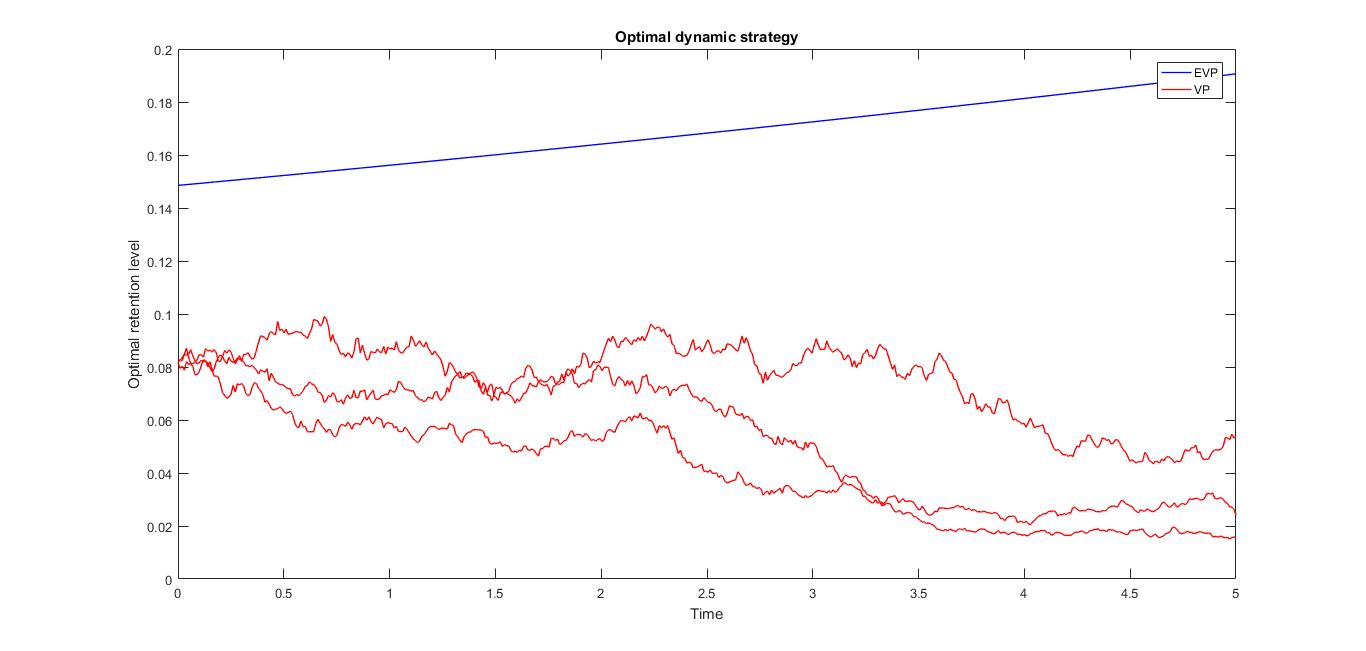}
\caption{The dynamics of the optimal strategies under EVP (red) and VP (blue).}
\label{img:dynamic}
\end{figure}

In Figure~\ref{img:eta} we start the sensitivity analysis investigating the effect of the risk aversion parameter on the optimal strategy at time $t=0$. As expected, there is an inverse relationship. Notice that for high values of $\eta$ the two strategies tend to the same level.

\begin{figure}[H]
\centering
\includegraphics[scale=0.3]{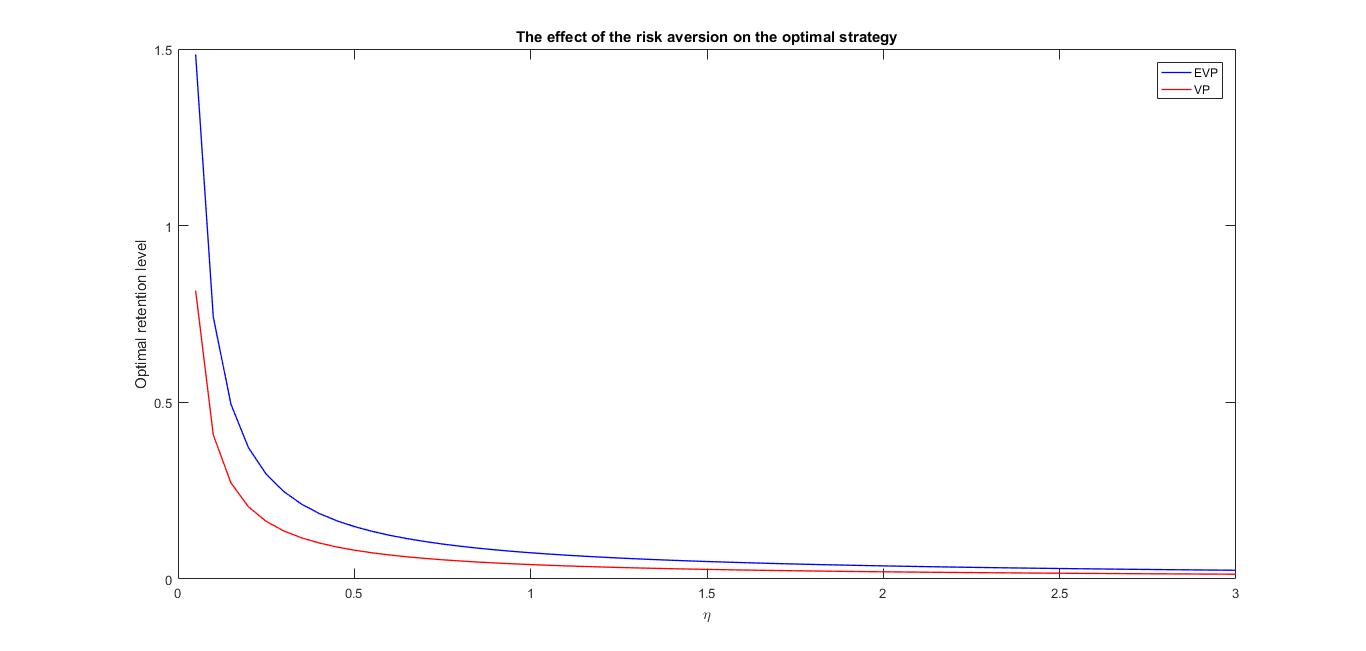}
\caption{The effect of the risk aversion on the optimal strategy under EVP (red) and VP (blue).}
\label{img:eta}
\end{figure}

Figure~\ref{img:theta} refers to the sensitivity analysis with respect to the reinsurance safety loading $\theta$. When $\theta=0$ the strategies coincide (because the premia coincide), then they diverge for increasing values of $\theta$.

\begin{figure}[H]
\centering
\includegraphics[scale=0.3]{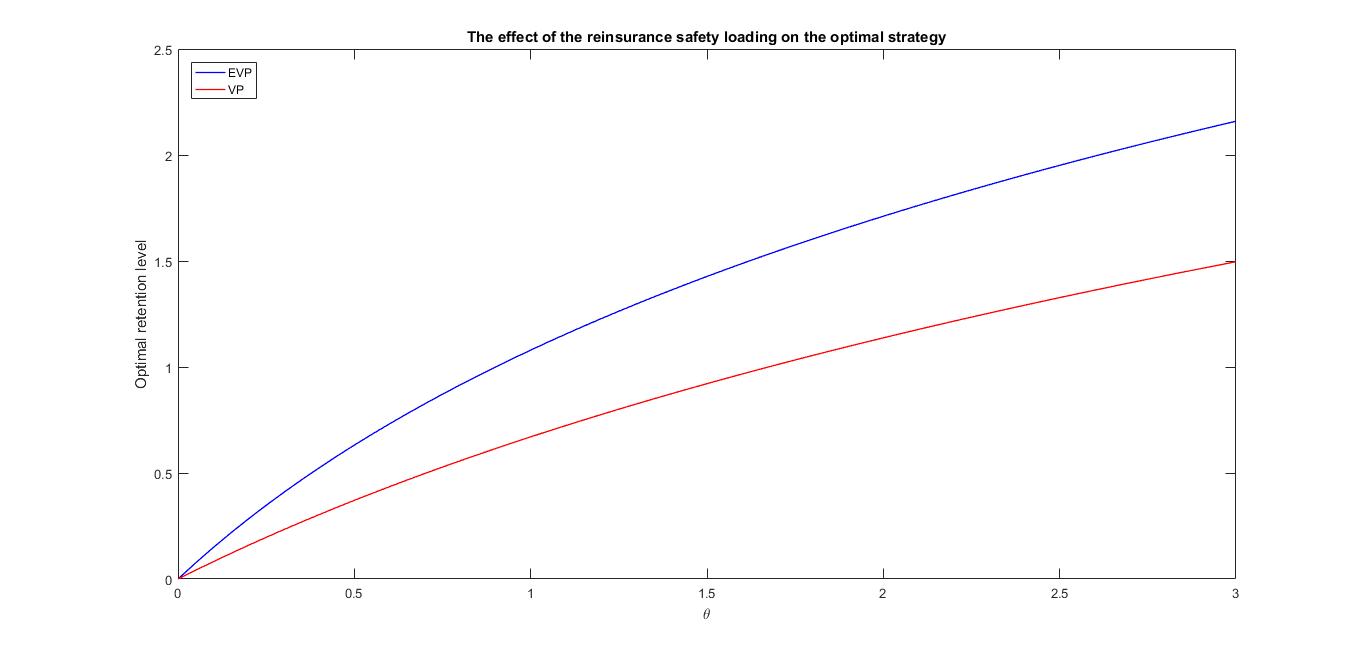}
\caption{The effect of the reinsurer's safety loading on the optimal strategy under EVP (red) and VP (blue).}
\label{img:theta}
\end{figure}

In Figure~\ref{img:r} we observe that the distance between the retention levels in the two cases is larger when $r<0$ and it decreases as long as $r$ increases. Nevertheless, even for positive values of the risk-free interest rate the distance is not negligible (see the pictures above, with $r=0.05$).

\begin{figure}[H]
\centering
\includegraphics[scale=0.3]{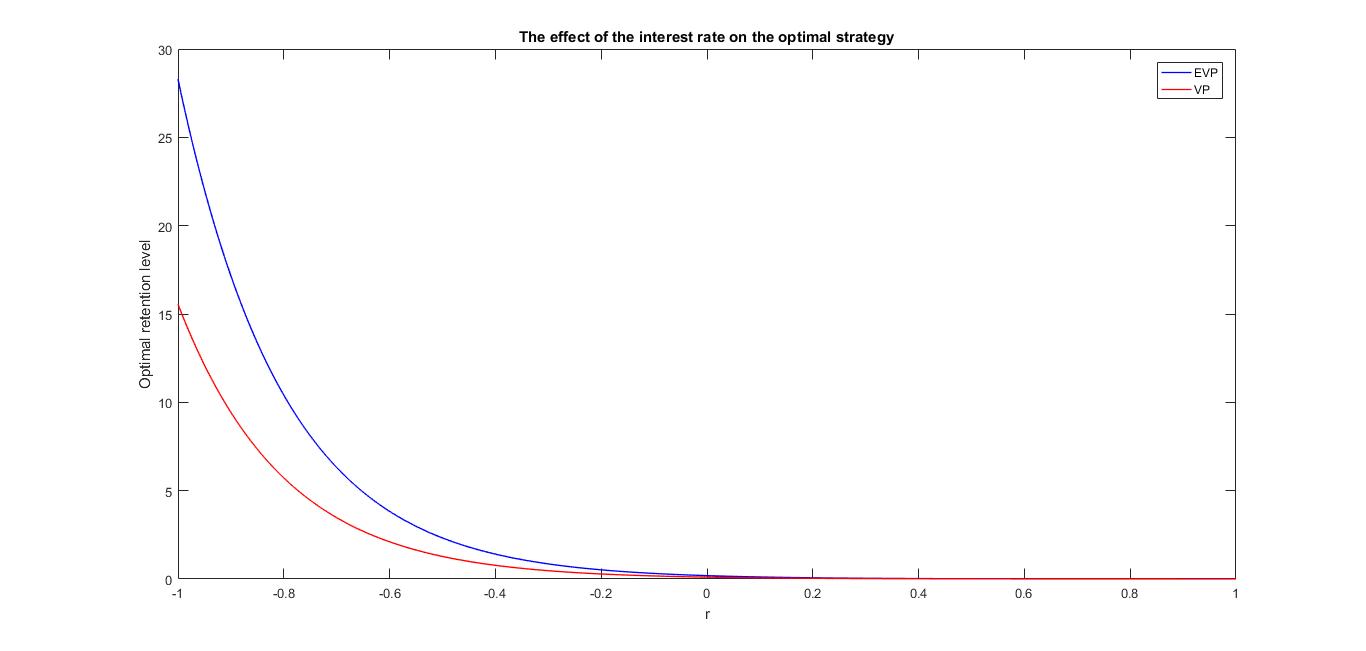}
\caption{The effect of the risk-free interest rate on the optimal strategy under EVP (red) and VP (blue).}
\label{img:r}
\end{figure}

In Figure~\ref{img:T} we study the response of the optimal strategy to variations of the time horizon. The two cases exhibit the same behavior, which is strongly influenced by the sign of the interest rate. In fact, if $r<0$ the retention level increases with the time horizon, while if $r>0$ the optimal strategy decreases with $T$.

\begin{figure}[H]
\centering
\includegraphics[scale=0.3]{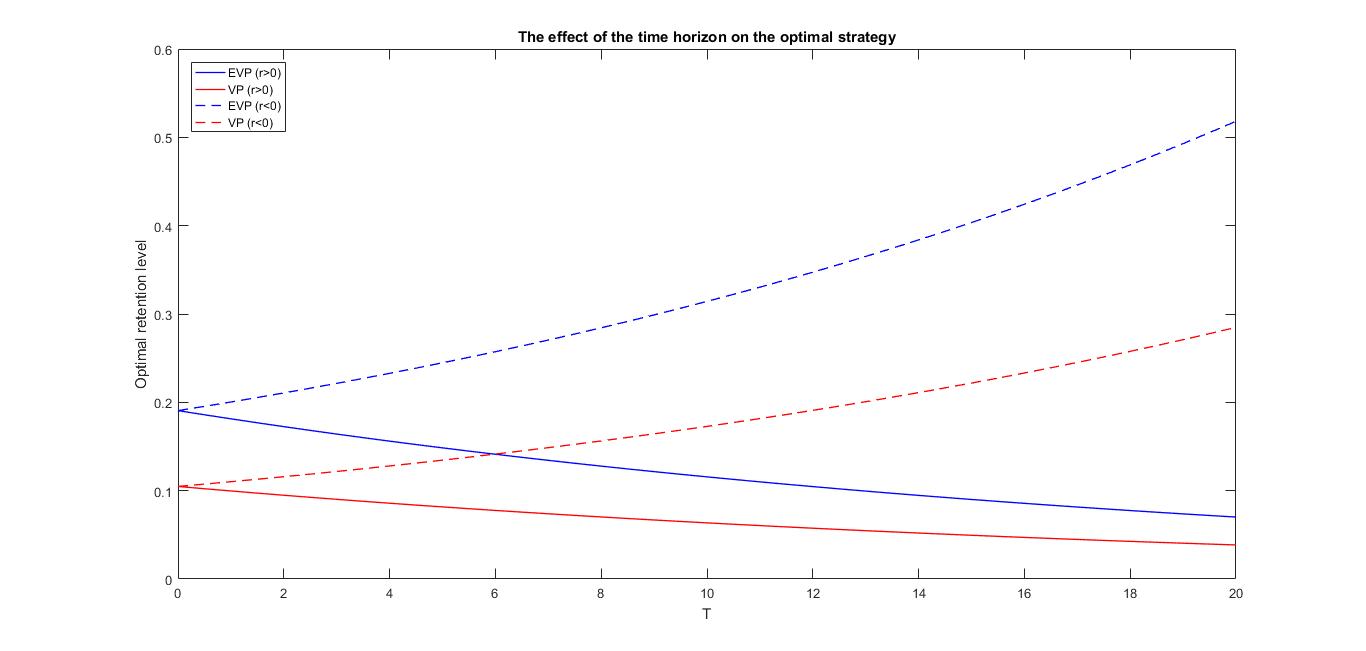}
\caption{The effect of the time horizon on the optimal strategy under EVP (red) and VP (blue).}
\label{img:T}
\end{figure}

Finally, thanks to Proposition~\ref{prop:fk} we are able to numerically approximate the value function by simulating the trajectories of $Y$. The graphical result (under VP) is shown in Figure \ref{img:vfun} below.

\begin{figure}[H]
\centering
\includegraphics[scale=0.3]{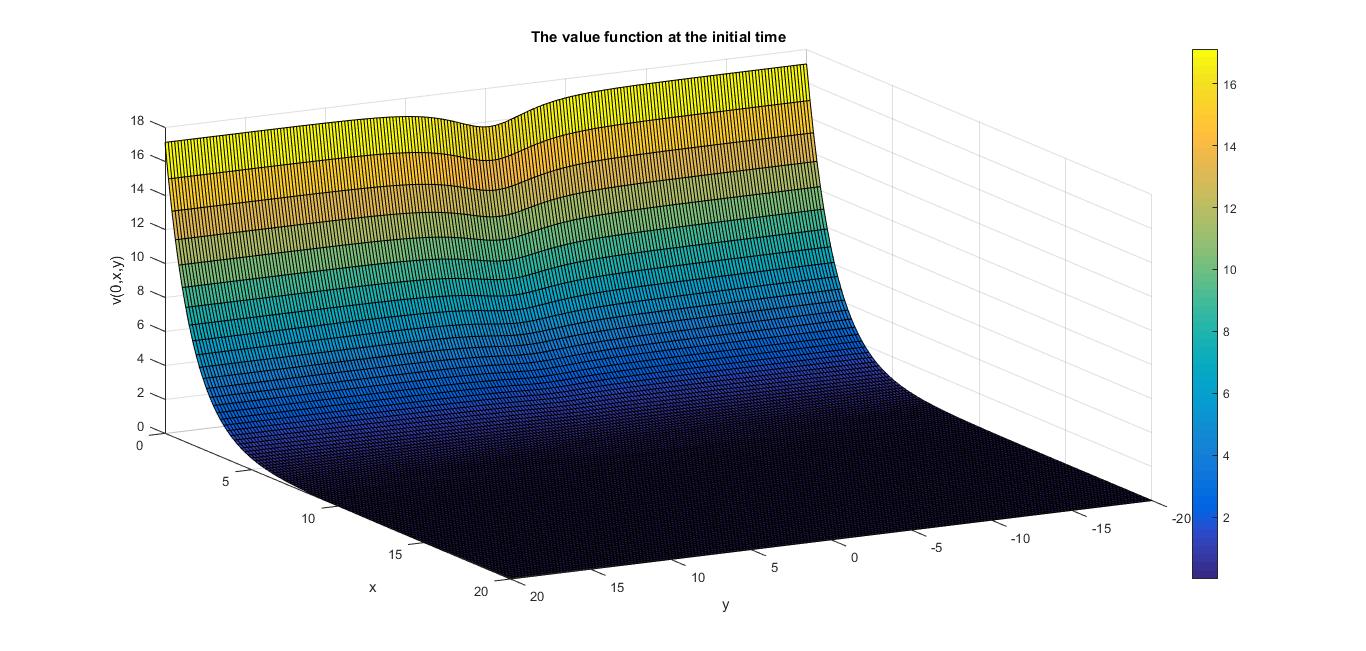}
\caption{The value function $v(0,x,y)$ at the initial time.}
\label{img:vfun}
\end{figure}


\appendix
\section{Appendix}
\label{appendix:proofs}

\begin{proof}[Proof of Proposition~\ref{prop:F_distribution}]
By equation \eqref{pp}, for any $H(t,z) = H_t \mathbbm{1}_A(z)$ with $\{H_t\}_{t\in[0,T]}$ any nonnegative $\{\mathcal{F}_t\}_{t\in[0,T]}$-predictable process  and 
$\forall A\in \mathcal{B}([0,+\infty)) $ we get 

$$\mathbb{E}\biggl[\int_0^T\int_0^{+ \infty} H_t \mathbbm{1}_A(z) m(dt,dz)\biggr]= \mathbb{E}\biggl[ \sum_{n\ge1} H_{T_n}\mathbbm{1}_{\{ Z_n \in A\}} \mathbbm{1}_{\{T_n\le T\}}\biggr] = \mathbb{E}\biggl[\int_0^T H_t \lambda(t,Y_t)  \int_A dF(z, Y_t) dt\biggr].$$

Since $H_{T_n}\mathbbm{1}_{\{T_n\le T\}}$ is an $\mathcal{F}_{T_n^-}$-measurable random variable (see Appendix 2, T4 in ~\cite{bremaud:pointproc}), denoting by 
$\mu_t(A) = \mathbb{P}[Z_n \in A\mid \mathcal{F}_{t^-} ]$  the conditional distribution of $Z_n$ given $\mathcal{F}_{t^-}$, we have that 
$$\mathbb{E}\biggl[\int_0^T\int_0^{+ \infty} H_t \mathbbm{1}_A(z) m(dt,dz)\biggr]= \mathbb{E}\biggl[ \sum_{n\ge1} H_{T_n} \mathbbm{1}_{\{T_n\le T\}} \mathbb{P}[Z_n \in A\mid \mathcal{F}_{T_n^-} ] \biggr] = $$
$$
\mathbb{E}\biggl[\int_0^T H_t  \mu_t(A) dN_t \biggr] = \mathbb{E}\biggl[\int_0^T H_t  \mu_t(A)\lambda(t,Y_t)  dt \biggr].$$
Hence the following equality holds true
$$\mathbb{E}\biggl[\int_0^T H_t  \mu_t(A)\lambda(t,Y_t)  dt \biggr] = \mathbb{E}\biggl[\int_0^T H_t \lambda(t,Y_t)  \int_A dF(z, Y_t) dt\biggr]$$
and by the arbitrariness of $\{H_t\}_{t\in[0,T]}$ and strictly positivity of $\lambda(t,Y_t)$ we finally obtain that
$$\forall A\in \mathcal{B}([0,+\infty)), \quad  \mu_t(A) =  \int_A dF(z, Y_t), \quad dt \times d\mathbb{P}-a.s..$$
\end{proof}

\section{Appendix}
\label{appendix:premium}

In this section we motivate formulas~\eqref{eqn:EVP} and~\eqref{eqn:VP}.
Let us denote by $\{C^\alpha_t\}_{t\in[0,T]}$ the reinsurer's cumulative losses at time $t$:
\[
C^\alpha_t=\int_0^t\int_0^{+\infty} (z - z\land\alpha_s)\,m(ds,dz), \qquad t\in[0,T].
\]

Recalling~\eqref{eqn:dual_projection}, by equation~\eqref{eqn:EVP} in Example~\ref{example:evp} we readily check that for any strategy $\{\alpha_t\}_{t\in[0,T]}$ under the EVP
\begin{align*}
\mathbb{E}\bigg[\int_0^tq(s,Y_s,\alpha_s)\,ds\biggr]&=
(1+\theta)\mathbb{E}\bigg[\int_0^t\int_0^{+\infty} (z - z\land\alpha_s)\,\lambda(s,Y_s)\,dF(z,Y_s)\,ds\biggr]\\
&=(1+\theta)\mathbb{E}\bigg[\int_0^t\int_0^{+\infty} (z - z\land\alpha_s)\,m(ds,dz)\biggr]\\
&=(1+\theta)\mathbb{E}[C^\alpha_t],
\end{align*}
for some safety loading $\theta>0$, i.e. for any time $t\in[0,T]$ the expected premium covers the expected losses plus an additional (proportional) term, which is the expected net income.\\

Now let us focus on Example~\ref{example:vp}. Under the VP the reinsurance premium should satisfy the following equation:
\begin{equation}
\label{eqn:vp_expected}
\mathbb{E}\bigg[\int_0^tq(s,Y_s,\alpha_s)\,ds\biggr]=\mathbb{E}[C^\alpha_t]+\theta\var[C^\alpha_t],
\end{equation}
for some safety loading $\theta>0$. We need to evaluate the variance term. Let us introduce the following stochastic process:
\[
M^\alpha_t=\int_0^t\int_0^{+\infty} (z-\alpha_s)_+\,\bigl(m(ds,dz)-\nu(ds,dz)\bigr), \qquad t\in[0,T],
\]
denoting $(x-y)_+=x-x\land y$. We have that
\begin{align*}
\var[C^\alpha_t] &= \mathbb{E}[(C^\alpha_t)^2]-\mathbb{E}[C^\alpha_t]^2\\
&= \mathbb{E}\bigl[\abs{M^\alpha_t}^2\bigr]+\mathbb{E}\biggl[\biggl(\int_0^t\int_0^{+ \infty} (z-\alpha_s)_+\,\lambda(s,Y_s) dF(z, Y_s)\,ds\biggr)^2\biggr]\\
&+2\mathbb{E}\biggl[M^\alpha_t\int_0^t\int_0^{+ \infty} (z-\alpha_s)_+\,\lambda(s,Y_s) dF(z, Y_s)\,ds\biggr]-\mathbb{E}[C^\alpha_t]^2.
\end{align*}
Denoting by $\langle M^\alpha\rangle_t$ the predictable covariance process of $M^\alpha_t$, using Remark~\ref{remark:random_measure} we finally obtain
\begin{equation*}
\begin{split}
\var[C^\alpha_t]&=\mathbb{E}[\langle M^\alpha\rangle_t]+\var\biggl[\int_0^t\int_0^{+ \infty} (z-\alpha_s)_+\,\lambda(s,Y_s) dF(z, Y_s)\,ds\biggr]\\
&=\mathbb{E}\bigg[\int_0^t\int_0^{+\infty} (z-\alpha_s)^2_+\,\lambda(s,Y_s) dF(z, Y_s)\,ds\biggr]+\var\biggl[\int_0^t\int_0^{+ \infty} (z-\alpha_s)_+\,\lambda(s,Y_s) dF(z, Y_s)\,ds\biggr].
\end{split}
\end{equation*}

Under the special case $\lambda(t,y)=\lambda(t)$ and $F(z,y)=F(z)$ (e.g. under the Cram\'er-Lundberg model), for any constant strategy $\alpha\in[0,+\infty)$ the previous equation reduces to
\[
\var[C^\alpha_t]=\mathbb{E}\bigg[\int_0^t\int_0^{+\infty} (z-\alpha)^2_+\,\lambda(s) dF(z)\,ds\biggr].
\]
Extending this formula to the model formulated in Section~\ref{section:model}, we obtain the expression~\eqref{eqn:VP}. Of course, there will be an approximation error, because in our general model the intensity and the claim size distribution depend on the stochastic factor. Nevertheless, this is a common procedure in the actuarial literature.


\newpage
\bibliographystyle{apalike}
\bibliography{bib/biblio}

\end{document}